\documentclass[11pt]{article}

\usepackage[margin=1in]{geometry}
\usepackage{amsmath,amssymb,amsthm,mathtools}
\usepackage{microtype}
\usepackage{enumitem}
\usepackage{graphicx}
\usepackage{float}
\usepackage[section]{placeins}
\usepackage{xcolor}
\usepackage{tikz}
\usepackage{pgfplots}
\usetikzlibrary{arrows.meta,calc,decorations.pathreplacing,positioning,fit,backgrounds,shapes.geometric,patterns,shadows,shapes.misc}

\definecolor{ptblue}{HTML}{1E5AA8}
\definecolor{ptlightblue}{HTML}{EAF2FB}
\definecolor{ptorange}{HTML}{D97A1D}
\definecolor{ptlightorange}{HTML}{FFF1E3}
\definecolor{ptgreen}{HTML}{237A4B}
\definecolor{ptlightgreen}{HTML}{EAF7EE}
\definecolor{ptpurple}{HTML}{6A4C93}
\definecolor{ptlightpurple}{HTML}{F1EAF8}
\definecolor{ptgray}{HTML}{5F6B7A}
\definecolor{ptdark}{HTML}{263445}
\definecolor{ptred}{HTML}{B63838}
\definecolor{ptlightred}{HTML}{FBEAEA}
\definecolor{ptgold}{HTML}{B8860B}
\definecolor{ptlightgold}{HTML}{FFF7D6}

\usepackage[colorlinks=true,linkcolor=ptblue,citecolor=ptblue,urlcolor=ptblue]{hyperref}

\pgfplotsset{compat=1.18}
\pgfmathdeclarefunction{exactpt}{1}{%
  \pgfmathparse{0.5*(1 + floor(1/(#1))*(#1)^2 + (1-floor(1/(#1))*(#1))^2)}%
}

\numberwithin{equation}{section}

\newtheorem{theorem}{Theorem}[section]
\newtheorem{lemma}[theorem]{Lemma}
\newtheorem{proposition}[theorem]{Proposition}
\newtheorem{corollary}[theorem]{Corollary}
\newtheorem{definition}[theorem]{Definition}
\newtheorem{remark}[theorem]{Remark}

\DeclareMathOperator{\Tr}{Tr}

\newcommand{\C}{\mathbb C}
\newcommand{\Id}{\mathrm{Id}}
\newcommand{\ket}[1]{\lvert #1\rangle}
\newcommand{\bra}[1]{\langle #1\rvert}
\newcommand{\braket}[2]{\langle #1\mid #2\rangle}
\newcommand{\abs}[1]{\left\lvert #1\right\rvert}
\newcommand{\norm}[1]{\left\lVert #1\right\rVert}
\newcommand{\set}[1]{\left\{#1\right\}}
\newcommand{\Prod}{\mathsf{Prod}}
\newcommand{\ov}{\operatorname{Overlap}}
\newcommand{\PT}{\operatorname{PT}}
\newcommand{\QMA}{\mathsf{QMA}}

\newcommand{\SWAP}{\mathsf{SWAP}}

\newcommand{\eps}{\varepsilon}
\usepackage{comment}

\hypersetup{
  pdftitle={An Optimal Analysis of the Product Test},
  pdfauthor={},
  pdfsubject={Quantum property testing},
  pdfkeywords={product test, quantum property testing, multipartite entanglement, QMA(2), product states, soundness}
}

\title{An Optimal Analysis of the Product Test}
\author{Jacob Beckey\thanks{University of Illinois Urbana--Champaign.
Email: \url{jbeckey@illinois.edu}} \and  Fernando Granha Jeronimo\thanks{University of Illinois Urbana--Champaign.
Email: \url{granha@illinois.edu}} \and Pei Wu\thanks{The Pennsylvania State University.
Email: \url{pei.wu@psu.edu}}}
\date{\today}

\begin{document}
\maketitle

\begin{abstract}
Product testing, i.e., deciding whether a pure multipartite quantum state is fully unentangled across a specified tensor decomposition, serves as a bridge between quantum property testing, unentangled quantum proof systems, and tensor optimization. The rigorous formalization and analysis of this as a quantum property testing problem was initiated by Harrow and Montanaro [FOCS 2010, J.ACM, 13], who analyzed a simple circuit that performs a SWAP test on each pair of corresponding registers, has perfect completeness, and gives guarantees that do not depend on the number of parties or on the local dimensions. Despite being a fundamental property testing primitive and having many applications, the product test's exact (worst-case) acceptance probability curve has yet to be fully determined.

In this work, we determine this curve exactly.  Let \(\omega\) be the maximum squared overlap of the input with a product state, and let \(\PT_n(\omega)\) be the largest possible acceptance probability of the product test over all \(n\)-partite pure states with product overlap \(\omega\), allowing arbitrary finite local dimensions.  Define
\[
  s(\omega)=\max\left\{\sum_j \lambda_j^2:
  \lambda_j\ge0,
  \sum_j \lambda_j=1,
  \lambda_j\le\omega\ \text{for all }j\right\}.
\]
If \(m=\lfloor1/\omega\rfloor\) and \(r=1-m\omega\), then \(s(\omega)=m\omega^2+r^2\). We prove that, for every \(n\ge2\) and every \(\omega\in(0,1]\),
\[
  \PT_n(\omega)=\frac{1+s(\omega)}2
  =\frac12\left(1+m\omega^2+(1-m\omega)^2\right).
\]
Thus on \(\omega\in(1/(m+1),1/m]\), the exact curve is the quadratic
\[
  \PT_n(\omega)=1-m\omega+\frac{m(m+1)}2\omega^2.
\]
The formula recovers the previously known tight branch \(1-\omega+\omega^2\) for \(\omega\ge1/2\), resolves all low-overlap regimes \(\omega<1/2\), and implies \(\PT_n(\omega)\to1/2\) as \(\omega\to0\) answering an open problem in [Soleimanifar and Wright, SODA 2022].  Lovitz and Lowe recently determined this curve for bipartite states ($n=2$). Our main result proves that the bipartite curve remains extremal for every number of parties $n$, thereby answering their open question.
As a complexity-theoretic application, the one-shot soundness parameter in the Harrow-Montanaro reduction from $\QMA(k)$ to $\QMA(2)$ improves from $1-(1-\sigma)^2/100$ to $1-(1-\sigma)^2/4$, where $\sigma$ is the original soundness. Our techniques, built upon those of Soleimanifar and Wright, allow us to resolve these open questions while remaining surprisingly elementary.
\end{abstract}
\clearpage

\section{Introduction}


Entanglement is an essential feature separating quantum information from its classical analogue. Thus, in many areas of quantum information, determining the entanglement content of a quantum state is of central importance. In algorithmic and complexity-theoretic settings, however, one often needs to certify the opposite property: that a state factors across a prescribed list of registers.  A pure state on $\mathcal H_1\otimes\cdots\otimes\mathcal H_n$ is fully product if it has the form
\[
  \ket{v_1}\otimes\cdots\otimes\ket{v_n}.
\]
This is the quantum analogue of complete independence across coordinates. In the setting of quantum complexity, product states represent the promise that multiple quantum witnesses are unentangled~\cite{harrow2010Efficient}. In mean-field theories, which are ubiquitous in physics and chemistry, the underlying systems are modeled as product states~\cite{brandao2016ProductState,bakshi2025Learning}. Product states are also often studied as a special case of matrix product states, which generalize mean-field theories by allowing for some degree of entanglement~\cite{soleimanifar2022Testing}. Finally, as pointed out in Ref.~\cite{harrow2010Efficient}, the product test has surprising applications to classical tensor optimization problems.

As such, testing productness is a fundamental task. Among all possible product testers, what properties are desirable? For many of the above applications, we require a procedure that does not depend on the dimension of the input states or the number of parties, utilizes only few-copy measurements, respects the tensor decomposition, and rejects states that are far from every product state. The circuit first introduced in Refs.~\cite{brennen2003observable,mintert2005Concurrence} and then rigorously studied in the property testing framework by Harrow and Montanaro~\cite{harrow2010Efficient} is the canonical procedure with these features. As such, we will refer to it as \textit{the} product test herein.

\subsection{From the SWAP test to the product test}

The historical path to the product test begins with the SWAP test~\cite{barenco1997Stabilization}, the elementary two-copy primitive used, for example, in quantum fingerprinting \cite{buhrman2001Quantum} and non-linear functional estimation~\cite{ekert2002Direct}. The SWAP test checks whether two registers are symmetric under exchange; for pure inputs this reduces to checking equality. Given a pure $n$-partite input state, the product test  (shown in Fig.~\ref{fig:product-test-schematic}) applies the SWAP test locally, register by register. As such, a product state passes every comparison with certainty, because each local pair is in a pure symmetric state. For an entangled state, however, the reduced state on a single pair of corresponding registers -- across two independent copies -- can have weight in the antisymmetric subspace, and the local test detects exactly this failure. Because of this feature, the parallelized SWAP-test circuit has been used in entanglement characterization for decades~\cite{brennen2003observable,mintert2005Concurrence,beckey2021Computable}.

More formally, given two copies of a pure state \(\ket\psi\in\mathcal H_1\otimes\cdots\otimes\mathcal H_n\), the product test performs an independent SWAP test on each pair of corresponding registers and accepts iff every local test accepts.  Equivalently, its accepting projector is
\begin{equation}\label{eq:product-projector-intro}
  \Pi_{\Prod,n}
  =\bigotimes_{i=1}^n \Pi_{\mathcal H_i},
  \qquad
  \Pi_{\mathcal H_i}=\frac{\mathbb{I}+\SWAP_i}{2}.
\end{equation}

\begin{figure}[t]
\centering
\resizebox{0.95\linewidth}{!}{%
\begin{tikzpicture}[
  >=Stealth,
  line cap=round,
  line join=round,
  every node/.style={font=\small},
  panel/.style={draw=ptgray!26, fill=ptgray!3, rounded corners=6pt, line width=.75pt},
  lane/.style={draw=ptblue!28, fill=white, rounded corners=4pt, line width=.60pt},
  reg/.style={draw=ptblue!82!black, fill=ptlightblue, rounded corners=2pt,
    minimum width=.96cm, minimum height=.45cm, line width=.75pt,
    font=\footnotesize\bfseries, text=ptdark},
  test/.style={draw=ptorange!90!black, fill=ptlightorange, rounded corners=3pt,
    minimum width=1.10cm, minimum height=.73cm, line width=.92pt,
    align=center, font=\footnotesize\bfseries, text=ptdark},
  accept/.style={draw=ptgreen!78!black, fill=ptlightgreen, rounded corners=4pt,
    align=center, text width=2.26cm, inner sep=6pt, line width=.90pt,
    font=\footnotesize\bfseries, text=ptdark},
  eqbox/.style={draw=ptgray!34, fill=white, rounded corners=4pt,
    align=center, inner sep=7pt, line width=.65pt, font=\scriptsize},
  labelbox/.style={draw=ptgray!34, fill=white, rounded corners=4pt,
    inner sep=4pt, line width=.65pt,
    font=\footnotesize\bfseries, text=ptorange!90!black},
  note/.style={font=\footnotesize\bfseries},
  wire/.style={draw=ptgray!70, line width=.76pt},
  arrow/.style={-{Stealth[length=2.0mm]}, draw=ptgray!72, line width=.86pt},
  bus/.style={draw=ptgray!55, line width=.70pt}
]
\draw[panel] (-.82,-3.34) rectangle (12.08,2.84);
\draw[lane] (.20,1.42) rectangle (7.93,2.08);
\draw[lane] (.20,-1.58) rectangle (7.93,-.92);
\node[anchor=east, note, text=ptblue!85!black] at (.04,1.75) {$\ket{\psi}$};
\node[anchor=east, note, text=ptblue!85!black] at (.04,-1.25) {$\ket{\psi}$};
\foreach \x/\lab in {1.05/1,2.62/2,4.19/3,7.24/n} {
  \node[reg] (u\lab) at (\x,1.75) {$\mathcal H_{\lab}$};
  \node[reg] (l\lab) at (\x,-1.25) {$\mathcal H_{\lab}$};
  \node[test] (s\lab) at (\x,0) {SWAP \\[-.5mm]test};
  \draw[wire] (u\lab.south) -- (s\lab.north);
  \draw[wire] (l\lab.north) -- (s\lab.south);
}
\node[font=\Large, text=ptgray] at (5.48,1.75) {$\cdots$};
\node[font=\Large, text=ptgray] at (5.48,0) {$\cdots$};
\node[font=\Large, text=ptgray] at (5.48,-1.25) {$\cdots$};
\draw[bus] (7.76,.28) -- (8.42,.28) -- (8.42,0);
\draw[bus] (7.76,-.28) -- (8.42,-.28) -- (8.42,0);
\node[accept] (acc) at (10.45,0) {accept iff\\[-.3mm]every SWAP\\[-.3mm]test accepts};
\draw[arrow] (8.42,0) -- (acc.west);
\draw[decorate, decoration={brace, amplitude=4.4pt}, draw=ptorange!90!black, line width=.72pt]
  (.70,.5) -- (7.60,.50)
  node[midway, yshift=.5cm, labelbox]
  {local symmetric projections};
\node[eqbox, text width=5.75cm] at (5.78,-2.68)
  {$\displaystyle \Pi_{\Prod,n}=\bigotimes_{i=1}^n\Pi_{\mathcal H_i}$\qquad
   $\displaystyle\Pi_{\mathcal H_i}=\frac{\mathbb{I}+\SWAP_i}{2}$\\[.5mm]
   {\scriptsize perfect completeness on product states}};
\end{tikzpicture}%
}
\caption{\textbf{The product test.} Given two copies of the same multipartite state, the product test performs a SWAP test on each pair of corresponding registers, and accepts only if all individual tests accept. Although the measurements act independently on each subsystem pair, the overall acceptance probability detects the global failure of product structure.}
\label{fig:product-test-schematic}
\end{figure}

Harrow and Montanaro placed this test at the center of the theory of unentangled quantum proofs and tensor optimization \cite{harrow2010Efficient}.  A key consequence of their analysis is the structural collapse \(\QMA(k)=\QMA(2)\) for every \(k\ge2\), showing that a constant number of unentangled quantum witnesses can be reduced to two, see~\cite{JWL26} for a survey on $\QMA(2)$. Later, Soleimanifar and Wright revisited the product test before extending it to test properties of matrix product states \cite{soleimanifar2022Testing}. Their inductive argument gave a simpler proof of the soundness of the product test together with the exact high-overlap branch of the curve. Still, the low-overlap regime was left open, leaving the full soundness curve undetermined. This work determines that curve, i.e., how well the product test performs as a function of the distance of the input from product states.

\subsection{The exact soundness problem}
Define the closest-product-state overlap of $\ket{\psi}$ as
\begin{align}\label{eq:overlap-intro}
  \ov_n(\ket{\psi})=
  \max_{\ket{v_i}\in\mathcal H_i}
  \abs{\braket{\psi}{v_1\otimes\cdots\otimes v_n}}^2.
\end{align}
The trace distance between $\ket\psi$ and the set of product states is thus $\sqrt{1-\ov_n(\ket{\psi})}$.  Writing $\PT_n(\ket{\psi})$ for the acceptance probability of the product test, we study the worst-case acceptance probability
\begin{equation}\label{eq:pt-curve-intro}
  \PT_n(\omega)
  =\sup\set{\PT_n(\ket{\psi}):\ov_n(\ket{\psi})=\omega},
\end{equation}
where the supremum ranges over all finite local dimensions.  In words, \(\PT_n(\omega)\) asks for the hardest state to reject among all states whose best product approximation has fidelity exactly \(\omega\).

Harrow and Montanaro proved that if \(\omega=1-\epsilon\), then the rejection probability is \(\Theta(\epsilon)\), uniformly over the number of parties and local dimensions \cite{harrow2010Efficient}.  Soleimanifar and Wright proved the sharper upper bound
\begin{equation}\label{eq:sw-bound-intro}
  \PT_n(\omega)
  \le \min\left\{1-\omega+\omega^2,
                   \frac23+\frac13\omega^2\right\},
\end{equation}
which is tight for \(\omega\ge1/2\) \cite{soleimanifar2022Testing}.  Hence the large-overlap part of the curve was already known:
\begin{equation}\label{eq:known-large-regime}
  \PT_n(\omega)=1-\omega+\omega^2
  \qquad (\omega\ge1/2).
\end{equation}
The unresolved regime was \(\omega<1/2\), where the state is far from every product state. The $d$-dimensional maximally entangled state $\ket{\psi}=\frac{1}{\sqrt{d}} \sum_{i=1}^d \ket{ii}$ has \(\omega=1/d\) and product-test acceptance \(\frac12(1+1/d)\), suggesting that the limiting worst-case acceptance probability is $1/2$. But the known envelopes did not determine the curve between the reciprocal points or below \(1/2\).  Determining the behavior of the test in this small-$\omega$ (large-$\epsilon$) regime, including the limiting value of the worst-case acceptance probability, was posed as an open problem by Harrow and Montanaro~\cite[Section~8, Problem~2]{harrow2010Efficient}, and Soleimanifar and Wright asked explicitly whether \(\PT_n(\omega)\to1/2\) as \(\omega\to0\)~\cite{soleimanifar2022Testing}. In the special case 
of bipartite states, $n=2$, Lovitz and Lowe \cite{lovitz2026Nearly} answer both questions and they ask for the general multipartite case, $n > 2$ with arbitrary $\omega < 1/2$. Our main result, Theorem~\ref{thm:main-intro}, resolves both questions in the general case.

\subsection{Main result}
The exact soundness curve is governed by a classical capped-simplex problem, which can be stated as follows. Among all probability vectors whose entries are at most \(\omega\), maximize the collision probability:
\begin{equation}\label{eq:s-def-intro}
  s(\omega)=
  \max\left\{\sum_j p_j^2:
    p_j\ge0,
    \sum_jp_j=1,
    p_j\le\omega\ \text{for all }j\right\}.
\end{equation}
If we let
\begin{equation}\label{eq:m-r-intro}
  m=\left\lfloor\frac1\omega\right\rfloor
  \quad \text{and} \quad
  r=1-m\omega,
\end{equation}
then the maximizing vector is obtained by letting $p_j=\omega$ for exactly $m$ entries, putting the residual mass \(r\) on one final entry, and setting the remaining $p_j$'s equal to zero.  Consequently, the maximum in Eq.~\ref{eq:s-def-intro} can be expressed in closed form as
\begin{equation}\label{eq:s-closed-intro}
  s(\omega)=m\omega^2+r^2.
\end{equation}

\begin{theorem}[Exact all-regimes product-test curve]\label{thm:main-intro}
For every \(n\ge2\) and every \(\omega\in(0,1]\),
\begin{equation}\label{eq:main-intro}
  \PT_n(\omega)=\frac{1+s(\omega)}2.
\end{equation}
Equivalently, with \(m=\lfloor1/\omega\rfloor\),
\begin{equation}\label{eq:main-piecewise-intro}
  \PT_n(\omega)=\frac12\left(1+m\omega^2+(1-m\omega)^2\right).
\end{equation}
On the interval \(\omega\in(1/(m+1),1/m]\), this is the quadratic
\begin{equation}\label{eq:main-piece-m-intro}
  \PT_n(\omega)=1-m\omega+\frac{m(m+1)}2\omega^2.
\end{equation}
Moreover, the supremum is achieved by a bipartite state tensored with product states on any additional registers.
\end{theorem}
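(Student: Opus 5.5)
The proof has two halves: a lower bound from an explicit bipartite construction, and a matching upper bound proved by induction on $n$ in the style of Soleimanifar and Wright. Throughout I use the standard identity
\[
\PT_n(\ket\psi)=\frac{1}{2^n}\sum_{S\subseteq[n]}\Tr\rho_S^2 ,
\]
where $\rho_S$ is the reduced state of $\ket\psi$ on the registers in $S$. It follows by expanding $\bigotimes_i(\mathbb I+\SWAP_i)/2$ and using $\bra{\psi}^{\otimes 2}\SWAP_S\ket{\psi}^{\otimes 2}=\Tr\rho_S^2$.

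\textbf{Lower bound.} Take $\ket\psi=\sum_j\sqrt{p_j}\ket{jj}$ on $\mathcal H_1\otimes\mathcal H_2$, where $p$ is the capped-simplex maximizer: $m$ entries equal to $\omega$ and one entry equal to $r$. Tensor this with fixed product states on registers $3,\dots,n$. The overlap with product states is the largest squared Schmidt coefficient, which is $\omega$, since $r<\omega$ when $r>0$ and $r=0$ when $1/\omega$ is an integer. The extra registers are in a product state and pass their SWAP tests with certainty. For the bipartite part, the sum above has four terms: $\Tr\rho_\emptyset^2=\Tr\rho_{12}^2=1$ and $\Tr\rho_1^2=\Tr\rho_2^2=\sum_j p_j^2$. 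Hence the acceptance probability is $\tfrac14(2+2s(\omega))=\tfrac12(1+s(\omega))$. This proves $\PT_n(\omega)\ge (1+s(\omega))/2$ and also gives the ``moreover'' clause. The piecewise quadratic form is direct algebra from $s=m\omega^2+(1-m\omega)^2$.

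\textbf{Upper bound, setup and base case.} I prove $\PT_n(\ket\psi)\le (1+s(\ov_n(\ket\psi)))/2$ by induction on $n$. First I record the facts about $s$ that the induction needs. It is nondecreasing in $\omega$. It is the upper concave-type envelope arising from the piecewise quadratics $1-m\omega+\tfrac{m(m+1)}2\omega^2$, so for each $\omega$ one has $s(\omega)\le a+b\omega$ along the appropriate chord or tangent. It also satisfies $s(\omega)\ge\omega^2$ and $s(\omega)\ge \omega$-dependent convexity bounds. For the base case $n=2$, write $\ket\psi$ in Schmidt form with coefficients $p_j$. Then $\PT_2=\tfrac12(1+\sum_j p_j^2)$ and $\ov_2=\max_j p_j$, so the bound is exactly the definition of $s$.

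\textbf{Inductive step (the main obstacle).} Split off the last register and write
\[
\ket\psi=\sum_j\sqrt{q_j}\ket{\phi_j}\ket{j}
\]
in Schmidt form across $[n-1]\,|\,n$. Then $\PT_n(\ket\psi)=\tfrac12\bigl(A+B\bigr)$, where $A$ is the acceptance of the $(n-1)$-party test on $\rho_{[n-1]}=\sum_j q_j\phi_j$ and $B$ is the corresponding cross term weighted by $\Tr\rho_n$-type factors. Following Soleimanifar--Wright, I would bound $\PT_n(\ket\psi)$ by a convex combination of $\PT_{n-1}(\ket{\phi_j})$ and pairwise overlap terms, then apply the inductive hypothesis to each $\ket{\phi_j}$. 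This uses $\ov_{n-1}(\phi_j)\le \ov_n(\psi)/q_j$, capped at $1$, and a monotonicity argument on the $q_j$. The hard part is closing the resulting optimization. One has to show that a mixture $\sum_j q_j\,\tfrac12\bigl(1+s(\min(1,\omega/q_j))\bigr)$, together with the $\sum q_j^2$-type terms, never exceeds $\tfrac12(1+s(\omega))$. My first attempt would be to reduce this to a one-dimensional inequality on each interval $(1/(m+1),1/m]$. There the curve is a single quadratic, so the claim becomes a polynomial inequality in the $q_j$ subject to $\sum_j q_j=1$ and the cap constraints. I would then handle it by majorization (Schur-convexity), pushing $q$ toward the capped extremal vector. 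The risk is that a simple concavity step loses exactly at the breakpoints $1/m$. If that happens, I would instead induct on a strengthened statement that tracks the full vector of top product overlaps rather than just $\omega$.
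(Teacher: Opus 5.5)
\textbf{Gap: the inductive step is never closed, and the target inequality is mis-weighted.} Your lower bound is correct and matches the paper. So do your ingredients for the upper bound: induction on $n$, a Schmidt split off one register, and the branch constraint $q_j\,\ov_{n-1}(\ket{\phi_j})\le\omega$. The problem starts with the weights. A diagonal branch enters with weight $q_j^2$, not $q_j$. The mixture $\sum_j q_j\cdot\tfrac12\bigl(1+s(\min(1,\omega/q_j))\bigr)$ you write down already equals $1$ on the $d$-dimensional maximally entangled state ($q_j=\omega=1/d$), which exceeds the target $\tfrac12(1+1/d)$. So whatever the unspecified ``$\sum q_j^2$-type terms'' are, the structure is off.

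Here is the correct bookkeeping in your own split. Complementary subsets have equal purity, so $\PT_n(\ket\psi)=\Tr\bigl(\Pi_{\Prod,n-1}\rho_{[n-1]}^{\otimes2}\bigr)=\sum_{j,k}q_jq_k\bra{\phi_j\phi_k}\Pi_{\Prod,n-1}\ket{\phi_j\phi_k}$ holds exactly. The range of $\Pi_{\Prod,n-1}$ is fixed by $\prod_i\SWAP_i$, so $\Pi_{\Prod,n-1}$ lies below the global symmetric projector. Since also $\braket{\phi_j}{\phi_k}=0$ for $j\neq k$, every cross term is at most $\tfrac12$. Hence $\PT_n(\ket\psi)\le\sum_jq_j^2\PT_{n-1}(\ket{\phi_j})+\sum_{j<k}q_jq_k$, which is the paper's first-SWAP reduction. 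After the induction hypothesis, the identity $\tfrac12\sum_jq_j^2+\sum_{j<k}q_jq_k=\tfrac12$ leaves exactly one thing to prove: $\sum_jq_j^2\,s(\phi_j)\le s(\omega)$ whenever $q_j\phi_j\le\omega$.

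\textbf{Missing idea: the refinement argument.} This last inequality needs no analysis of the shape of $s$. For each $j$, choose a capped optimizer $p^{(j)}$ for $s(\phi_j)$ and set $v_{jk}=q_jp^{(j)}_k$. Then $v$ is a probability vector whose entries are all at most $q_j\phi_j\le\omega$. By the definition of $s$, $\sum_jq_j^2s(\phi_j)=\sum_{j,k}v_{jk}^2\le s(\omega)$.

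\textbf{Your substitutes would not work as planned.} Interval-by-interval polynomial inequalities, Schur-convexity and a strengthened induction are all left unexecuted. More seriously, the analytic facts about $s$ you plan to use are wrong. On each piece, $s(\omega)=1-2m\omega+m(m+1)\omega^2$ is convex. At each interior breakpoint $1/m$ ($m\ge2$), the left derivative is $2$ and the right derivative is $0$. So $s$ is neither convex nor concave, and it is not an envelope of chords or tangents. Any argument relying on global concavity or convexity of $s$ would fail exactly at the breakpoints you flagged. The refinement argument avoids this because it works with feasible probability vectors rather than with the function $s$.
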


In Fig.~\ref{fig:comparison}, we plot our curve along with those of Refs.~\cite{harrow2010Efficient,soleimanifar2022Testing} for comparison. For \(m=1\), Theorem~\ref{thm:main-intro} recovers the tight Soleimanifar--Wright branch \(1-\omega+\omega^2\)~\cite{soleimanifar2022Testing}.  The first new pieces are \(1-2\omega+3\omega^2\) on \(1/3<\omega\le1/2\), \(1-3\omega+6\omega^2\) on \(1/4<\omega\le1/3\), and \(1-4\omega+10\omega^2\) on \(1/5<\omega\le1/4\).  The pieces meet continuously at the reciprocal breakpoints.  Since every feasible capped distribution satisfies \(\sum_jp_j^2\le\omega\), the theorem immediately implies the low-overlap limit.

\begin{corollary}[Small-overlap limit]\label{cor:small-overlap}
For every \(n\ge2\),
\[
  \lim_{\omega\to0}\PT_n(\omega)=\frac12.
\]
More quantitatively,
\[
  \frac12\le \PT_n(\omega)\le \frac12+\frac\omega2.
\]
\end{corollary}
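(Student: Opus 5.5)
The plan is to read the corollary off Theorem~\ref{thm:main-intro}, which gives the exact value $\PT_n(\omega)=\frac{1+s(\omega)}{2}$ for every $n\ge2$ and every $\omega\in(0,1]$. With this identity, both the two-sided inequality and the limit reduce to elementary bounds on the capped-simplex quantity $s(\omega)$ from Eq.~\eqref{eq:s-def-intro}. No further quantum reasoning is needed; the whole difficulty sits in the main theorem, which we take as given.

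\emph{Lower bound.} Every feasible vector in Eq.~\eqref{eq:s-def-intro} has $\sum_j p_j^2\ge 0$, and the feasible set is nonempty. For instance, $m=\lfloor 1/\omega\rfloor$ entries equal to $\omega$ plus one entry equal to $r=1-m\omega\in[0,\omega)$ is feasible. Hence $s(\omega)\ge0$, and the theorem gives $\PT_n(\omega)\ge\frac12$. One could also use the closed form $s(\omega)=m\omega^2+r^2\ge0$ from Eq.~\eqref{eq:s-closed-intro} directly.

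\emph{Upper bound.} For any feasible $p$, we have $p_j\le\omega$ for every $j$ and $\sum_j p_j=1$, so
\[
  \sum_j p_j^2\;\le\;\Bigl(\max_j p_j\Bigr)\sum_j p_j\;\le\;\omega .
\]
Taking the maximum over feasible $p$ gives $s(\omega)\le\omega$. Substituting into the theorem yields $\PT_n(\omega)\le\frac12+\frac\omega2$.

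\emph{Limit.} The two bounds give $\frac12\le\PT_n(\omega)\le\frac12+\frac\omega2$ for all $\omega\in(0,1]$ and all $n\ge2$. Letting $\omega\to0$, the squeeze theorem gives $\lim_{\omega\to0}\PT_n(\omega)=\frac12$. There is no real obstacle here. The only point worth checking is that the theorem's formula holds uniformly in $n$ and over arbitrary finite local dimensions, so the bound is uniform and the limit is genuinely independent of $n$; the statement of Theorem~\ref{thm:main-intro} guarantees this.
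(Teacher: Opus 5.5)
Your proposal is correct and follows essentially the paper's own route: the paper also derives the corollary directly from Theorem~\ref{thm:main-intro}, using $s(\omega)\le\omega$ (proved in Lemma~\ref{lem:capped-simplex} via $p_j^2\le\omega p_j$) for the upper bound and $s(\omega)\ge0$ for the lower bound. The limit then follows by squeezing.
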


\paragraph{Comparison with previous upper bounds.}
The prior all-regimes upper bound of Soleimanifar and Wright~\cite{soleimanifar2022Testing} is the envelope
\[
  B_{\mathrm{SW}}(\omega)
  =\min\left\{\omega^2-\omega+1,\ \frac13\omega^2+\frac23\right\}.
\]
The Harrow--Montanaro bound from \cite{harrow2010Efficient}, written in the same \(\omega=1-\epsilon\) parameterization and as plotted in \cite[Figure~2]{soleimanifar2022Testing}, is
\[
  B_{\mathrm{HM}}(\omega)
  =\min\left\{\omega+(1-\omega)^2+(1-\omega)^{3/2},\
  1-\frac{11}{512}(1-\omega)\right\}.
\]
Figure~\ref{fig:comparison} plots these two envelopes together with the exact curve proved here.  The exact curve agrees with \(\omega^2-\omega+1\) throughout \([1/2,1]\).  Below \(1/2\), it separates from the Soleimanifar--Wright envelope and tends to \(1/2\), whereas that envelope tends to \(2/3\) and the Harrow--Montanaro envelope tends to \(1-11/512\).

\begin{figure}[t]
\centering
\begin{tikzpicture}
\begin{axis}[
  width=0.92\linewidth,
  height=0.50\linewidth,
  xmin=0, xmax=1,
  ymin=0.48, ymax=1.015,
  xlabel={Maximal product overlap, $\omega$},
  ylabel={Acceptance probability, $\PT_n(\omega)$},
  xtick={0,0.2,0.25,0.3333333,0.5,1},
  xticklabels={$0$,$1/5$,$1/4$,$1/3$,$1/2$,$1$},
  ytick={0.5,0.6666667,0.75,1},
  yticklabels={$1/2$,$2/3$,$3/4$,$1$},
  axis line style={draw=ptgray!65},
  tick style={draw=ptgray!65},
  grid=major,
  major grid style={draw=ptgray!15},
  legend style={at={(0.665,0.295)},anchor=north west,draw=ptgray!30,fill=white,fill opacity=0.96,text opacity=1,font=\small},
  legend cell align=left,
  tick label style={font=\small},
  label style={font=\small},
]
\addplot[ptblue!86!black, thick, densely dotted, domain=0:1, samples=500, no marks] {min(x+(1-x)^2+(1-x)^1.5, 1 - (11/512)*(1-x))};
\addlegendentry{Harrow--Montanaro}
\addplot[ptorange!92!black, thick, densely dashed, domain=0:1, samples=500, no marks] {min(x^2-x+1, x^2/3+2/3)};
\addlegendentry{Soleimanifar--Wright}
\addplot[ptdark, very thick, domain=0.001:1, samples=1800, no marks] {exactpt(x)};
\addlegendentry{Exact curve}

\draw[ptgray!24, line width=.45pt] (axis cs:.5,0.48) -- (axis cs:.5,1.015);
\draw[ptgray!24, line width=.45pt] (axis cs:.333333333,0.48) -- (axis cs:.333333333,1.015);
\draw[ptgray!24, line width=.45pt] (axis cs:.25,0.48) -- (axis cs:.25,1.015);
\draw[ptgray!24, line width=.45pt] (axis cs:.2,0.48) -- (axis cs:.2,1.015);
\node[anchor=west, font=\scriptsize, text=ptdark, align=left] at (axis cs:.54,.815)
  {known tight branch\\[-.4mm]$\omega\ge 1/2$};
\node[anchor=west, font=\scriptsize, text=ptdark, align=left] at (axis cs:.015,.59)
  {new low-overlap\\[-.4mm]pieces};
\node[anchor=south, font=\scriptsize, text=ptgray] at (axis cs:.26,.505)
  {breakpoints $1/m$};
\end{axis}
\end{tikzpicture}
\caption{\textbf{Exact product-test acceptance curve for every n.} It shows that the curve of the bipartite case, $n=2$, of Lovitz and Lowe \cite{lovitz2026Nearly} is extremal for every $n \ge 2$. Compared with the two previous envelopes from \cite[Figure~2]{soleimanifar2022Testing}. The exact curve is piecewise quadratic with breakpoints at $1,1/2,1/3,\ldots$, agrees with the known branch on $[1/2,1]$, and tends to $1/2$ in the low-overlap limit.}
\label{fig:comparison}
\end{figure}

The theorem also gives the exact property-testing soundness statement in trace distance: if \(\ket\psi\) is \(\delta\)-far from every product state, the product test accepts with probability at most \((1+s(1-\delta^2))/2\), which equals \(1-\delta^2+\delta^4\) when \(\delta\le1/\sqrt2\); see Section~\ref{sec:consequences}.

\subsection{Why the capped curve is the right curve}

The lower bound is bipartite.  If \(\lambda_j\) are the squared Schmidt coefficients of a bipartite pure state, then
\[
  \PT_2(\ket{\psi})=\frac12\left(1+\sum_j \lambda_j^2\right),
  \qquad
  \ov_2(\ket{\psi})=\max_j \lambda_j.
\]
Thus the bipartite extremal problem is exactly the capped collision problem: maximize \(\sum_j \lambda_j^2\) subject to \(\lambda_j\le\omega\).  Theorem~\ref{thm:main-intro} says that this bipartite case is already the worst possible one.  Extra parties do not create a state that is harder for the product test to reject.

The upper bound is the main point.  Cut the state between the first register and the remaining registers,
\[
  \ket\psi=
  \sum_i\sqrt{\lambda_i}\ket{a_i}\ket{b_i}.
\]
After the first local SWAP test, the diagonal branches \(\ket{b_i}^{\otimes2}\) have weights \(\lambda_i^2\), while the off-diagonal branches have total weight \(\sum_{i<j}\lambda_i\lambda_j\).  A sharp first-SWAP reduction keeps every diagonal branch; combined with induction on the number of parties, it gives
\[
  \PT_n(\ket{\psi})
  \le
  \frac12+\frac12\sum_i\lambda_i^2s(\phi_i),
  \qquad
  \phi_i=\ov_{n-1}(\ket{b_i}).
\]
The product-overlap constraint transfers to each branch as \(\lambda_i\phi_i\le\omega\).  Refining branch \(i\) by a capped optimizer for \(s(\phi_i)\) produces one global probability distribution whose entries all have mass at most \(\omega\).  Its second moment is therefore at most \(s(\omega)\), which completes the induction.

This is the only new accounting compared with the high-overlap analyses.  When \(\omega\ge1/2\), an extremal capped distribution has at most two entries, so a two-branch analysis is sufficient (see Sec.~\ref{sec:capped}).  When \(\omega<1/2\), several diagonal branches can matter simultaneously.  Keeping all of them is what exposes the full piecewise-quadratic curve.

\subsection{Related work}

The product test belongs to the broader program of quantum property testing, surveyed by Montanaro and de Wolf \cite{montanaro2016Survey}.  In state testing, one asks for a structural promise about an unknown state using as few copies as possible; examples include spectrum and identity testing \cite{odonnell2015Quantum,yu2021Sample}, sequentially defined finite properties \cite{harrow2017Sequential}, and tests for multipartite entanglement structure.  Several recent papers separate nearby productness models that are easy to conflate.  Jones and Montanaro study testing whether a state is product across some bipartition rather than fully product across all named registers \cite{jones2025Testing}.  Bouland, Giurgica-Tiron, and Wright study a hidden-cut version where the product bipartition is unknown \cite{bouland2025State}.  Beckey, Coffman, Shlosberg, Schatzki, and Leditzky study the restrictions imposed by single-copy measurements \cite{beckey2025Product}.  The present paper keeps the measurement model fixed--the canonical joint two-copy local symmetric measurement--and determines its exact worst-case behavior.

The tensor-network viewpoint gives another motivation.  Matrix product states and tree tensor network states are central models of low-entanglement structure, with algorithmic foundations tied to area laws and one-dimensional Hamiltonians \cite{hastings2007Area,landau2015Polynomial}.  Soleimanifar and Wright showed that product states are an exceptional constant-copy endpoint: once the bond dimension is at least two, testing matrix product structure requires copy complexity that grows with the number of qudits \cite{soleimanifar2022Testing}.  Lovitz and Lowe extend this picture to tree tensor networks \cite{lovitz2026Nearly}, and Chen, Wang, and Zhang prove local-test lower bounds for bipartite unitarily invariant properties, with consequences for Schmidt-rank and MPS testing \cite{chen2024Local}.  Our result gives the exact soundness profile at this exceptional endpoint.

Finally, the parameter \(\ov_n(\ket{\psi})\) is the closest-product-state fidelity; under the common logarithmic convention, \(-\log \ov_n(\ket{\psi})\) is the geometric measure of entanglement \cite{wei2003Geometric}.  It is also the optimization objective in the closest-product-state learning problem studied by Bakshi, Bostanci, Kretschmer, Landau, Li, Liu, O'Donnell, and Tang \cite{bakshi2025Learning}.  That line of work asks how well one can find or estimate the closest product state.  Here the closest product state is used as a promise parameter: our main theorem gives the exact response of the canonical two-copy test as a function of that fidelity.

\subsection{Organization}

Section~\ref{sec:prelim} fixes notation and records basic product-test identities.  Section~\ref{sec:capped} proves the capped-simplex lemmas.  Section~\ref{sec:lower} gives the matching lower-bound construction.  Section~\ref{sec:first-swap} proves the sharp first-SWAP reduction.  Section~\ref{sec:upper} proves the upper bound and Theorem~\ref{thm:main-intro}.  Section~\ref{sec:qma-collapse} records the sharpened parameter in the Harrow--Montanaro reduction from $\QMA(k)$ to $\QMA(2)$.  Section~\ref{sec:consequences} gives trace-distance and finite-dimensional consequences, along with the analogy with Razborov's triangle-density curve.

\section{Preliminaries}\label{sec:prelim}

All Hilbert spaces are finite-dimensional over \(\C\), and all pure states are unit vectors unless stated otherwise.  The local dimensions are arbitrary throughout; the bounds below do not depend on them.  For \(n\ge1\), let $\mathcal H_{[n]}\coloneqq\mathcal H_1\otimes\cdots\otimes\mathcal H_n$. We next define the primitive upon which the product test is built.

\begin{definition}[SWAP test]\label{def:swap-test}
For a Hilbert space \(\mathcal H\), let
\begin{align}
     \Pi_{\mathcal H}=\frac{\mathbb{I}+\SWAP_{\mathcal H}}2
\end{align}
be the projector onto the symmetric subspace of \(\mathcal H\otimes\mathcal H\).  Given two states \(\ket\phi,\ket{\phi'}\in\mathcal H\), the \emph{SWAP test} is the two-outcome measurement \(\{\Pi_{\mathcal H},\Id-\Pi_{\mathcal H}\}\) on \(\ket\phi\otimes\ket{\phi'}\), accepting on the symmetric outcome.  Its acceptance probability is
\[
  \norm{\Pi_{\mathcal H}\ket\phi\ket{\phi'}}^2=\tfrac12\bigl(1+\abs{\langle{\phi|\phi'\rangle}}^2\bigr),
\]
so the test accepts with certainty when \(\ket\phi=\ket{\phi'}\) and with probability \(1/2\) when they are orthogonal.
\end{definition}

\begin{definition}[Product test]\label{def:product-test}
Given two copies of a state \(\ket\psi\in\mathcal H_{[n]}=\mathcal H_1\otimes\cdots\otimes\mathcal H_n\), the \emph{product test} performs a SWAP test on each pair of corresponding local registers in parallel, accepting only if all \(n\) tests accept.  Its accepting projector is
\[
  \Pi_{\Prod,n}=\Pi_{\mathcal H_1}\otimes\cdots\otimes\Pi_{\mathcal H_n},
\]
where \(\Pi_{\mathcal H_i}\) acts on the two copies of the \(i\)-th local register, and its acceptance probability is
\[
  \PT_n(\ket{\psi})=\norm{\Pi_{\Prod,n}\ket\psi^{\otimes2}}^2.
\]
For \(n=1\), the product test is a single SWAP test on two identical copies, hence \(\PT_1(\ket{\psi})=1\).
\end{definition}

\begin{lemma}[{Purity formula for the product test; cf.~\cite[Lemma~2]{harrow2010Efficient}}]\label{lem:purity-formula}
For \(\ket\psi\in\mathcal H_{[n]}\), let \(\rho_S\) denote the reduced density matrix of \(\ket\psi\bra\psi\) on the tensor factors indexed by \(S\subseteq[n]\), with \(\rho_\emptyset\) interpreted as the scalar \(1\).  Then
\[
  \PT_n(\ket{\psi})=\frac1{2^n}\sum_{S\subseteq[n]}\Tr(\rho_S^2).
\]
\end{lemma}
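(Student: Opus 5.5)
We prove the formula by expanding both copies of $\ket\psi$ in a product basis and evaluating the accepting projector one tensor factor at a time. For each $i$, fix an orthonormal basis $\{\ket{x}\}$ of $\mathcal H_i$ and write
\[
  \ket\psi=\sum_{x\in X}\psi_x\ket{x_1}\cdots\ket{x_n},
  \qquad X=X_1\times\cdots\times X_n .
\]
The tensor product $\ket\psi\otimes\ket\psi$ lives on two copies of each register. We regroup the factors so that the two copies of register $i$ are adjacent, i.e.\ we order the space as $(\mathcal H_1\otimes\mathcal H_1)\otimes\cdots\otimes(\mathcal H_n\otimes\mathcal H_n)$, and we let $\SWAP_i$ exchange the two copies of register $i$.

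Expanding the projector factor by factor gives
\[
  \Pi_{\Prod,n}=\prod_i\tfrac12(\mathbb I+\SWAP_i)=\frac1{2^n}\sum_{S\subseteq[n]}\SWAP_S,
  \qquad \SWAP_S=\prod_{i\in S}\SWAP_i .
\]
Since $\Pi_{\Prod,n}$ is a projector, the acceptance probability is
\[
  \PT_n(\ket\psi)=\bra\psi^{\otimes2}\Pi_{\Prod,n}\ket\psi^{\otimes2}=\frac1{2^n}\sum_S\bra\psi^{\otimes2}\SWAP_S\ket\psi^{\otimes2}.
\]

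It therefore suffices to prove the swap-trick identity
\[
  \bra\psi^{\otimes2}\SWAP_S\ket\psi^{\otimes2}=\Tr(\rho_S^2).
\]
\begin{itemize}[leftmargin=*]
\item In the computational basis, the left side equals $\sum_{x,y}\overline{\psi_x\psi_y}\,\psi_{x'}\psi_{y'}$. Here $(x',y')$ is obtained from $(x,y)$ by exchanging the coordinates indexed by $S$: $x'=(y_S,x_{\bar S})$ and $y'=(x_S,y_{\bar S})$.
\item The reduced state has entries $(\rho_S)_{a,b}=\sum_c\psi_{a c}\overline{\psi_{b c}}$, where $c$ ranges over $X_{\bar S}$.
\item Therefore $\Tr(\rho_S^2)=\sum_{a,b,c,d}\psi_{ac}\overline{\psi_{bc}}\psi_{bd}\overline{\psi_{ad}}$. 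Setting $x=(a,d)$ and $y=(b,c)$, this is exactly the sum in the first item.
\end{itemize}

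The case $S=\emptyset$ gives $\bra\psi^{\otimes2}\ket\psi^{\otimes2}=\|\psi\|^4=1$, consistent with the convention $\Tr(\rho_\emptyset^2)=1$. The case $S=[n]$ gives $|\langle\psi|\psi\rangle|^2=1=\Tr(\rho^2)$ for the pure global state.

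The only delicate point is the index bookkeeping for the partial swap. Writing $\SWAP_S$ as a permutation of the tensor factors, and using the identity $\Tr(A^2)=\Tr\bigl(\SWAP\,(A\otimes A)\bigr)$ on $\mathcal H_S\otimes\mathcal H_S$, gives a basis-free alternative:
\[
  \bra\psi^{\otimes2}\SWAP_S\ket\psi^{\otimes2}=\Tr\bigl(\SWAP_S(\rho\otimes\rho)\bigr)=\Tr\bigl(\SWAP_{\mathcal H_S}(\rho_S\otimes\rho_S)\bigr)=\Tr(\rho_S^2),
\]
where the middle equality comes from tracing out the complement $\bar S$ in both copies, on which $\SWAP_S$ acts trivially.

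Summing the identity over all $S\subseteq[n]$ gives $\PT_n(\ket\psi)=2^{-n}\sum_{S\subseteq[n]}\Tr(\rho_S^2)$.
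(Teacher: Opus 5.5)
Your proof is correct and follows the paper's argument. Like the paper, you expand $\Pi_{\Prod,n}=2^{-n}\sum_{S}\SWAP_S$, use that it is a projector, and reduce each term to $\Tr(\rho_S^2)$ by the partial swap trick; your closing basis-free chain is exactly the paper's computation, and your coordinate calculation (with $x=(a,d)$, $y=(b,c)$) correctly verifies the swap-trick step that the paper treats as standard.
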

\begin{proof}
Expanding the tensor product of local symmetric subsystem projectors as a sum over all choices of \(\mathbb{I}_i\) versus \(\SWAP_i\) in each factor gives a binomial-like expansion indexed by subsets \(S\subseteq[n]\):
\begin{align}
    \Pi_{\Prod,n}
  =\frac1{2^n}\bigotimes_{i=1}^n(\mathbb{I}_i+\SWAP_i)
  =\frac1{2^n}\sum_{S\subseteq[n]}\bigotimes_{i\in S}\SWAP_i\otimes\bigotimes_{i\notin S}\mathbb{I}_i
  =\frac1{2^n}\sum_{S\subseteq[n]}\SWAP_S,
\end{align}
  
where in the last step we have defined
\[
  \SWAP_S=\prod_{i\in S}\SWAP_i,
\]
to be the operator that swaps the two copies on every register in \(S\) and acts as the identity on registers outside \(S\).  Because the local swaps act on disjoint tensor factors and therefore commute, this product is well-defined and equals the tensor product written above.

We now apply the standard ``swap trick.''  For any pure state \(\ket\psi\in\mathcal H_{[n]}\) and any \(S\subseteq[n]\),
\[
  \bra\psi^{\otimes2}\SWAP_S\ket\psi^{\otimes2}
  =\Tr\bigl(\SWAP_S(\ket\psi\bra\psi\otimes\ket\psi\bra\psi)\bigr)
  =\Tr\bigl(\SWAP_S\,(\rho_S\otimes\rho_S)\bigr)
  =\Tr(\rho_S^2),
\]
where the second equality uses that \(\SWAP_S\) acts trivially on the registers outside \(S\), so tracing those registers out on each copy independently produces \(\rho_S\otimes\rho_S\), and the third equality is the standard swap trick \(\Tr(\SWAP\,(A\otimes B))=\Tr(AB)\) applied on the registers in \(S\).  Since \(\Pi_{\Prod,n}\) is an orthogonal projector, \(\PT_n(\ket{\psi})=\norm{\Pi_{\Prod,n}\ket\psi^{\otimes2}}^2=\bra\psi^{\otimes2}\Pi_{\Prod,n}\ket\psi^{\otimes2}\); summing over \(S\) and using the expansion of \(\Pi_{\Prod,n}\) then yields the desired formula. Note that the \(S=\emptyset\) term contributes \(\Tr(\rho_\emptyset^2)=1\), consistent with the convention \(\rho_\emptyset=1\).
\end{proof}

\begin{definition}[Product overlap]\label{def:overlap}
For \(\ket\psi\in\mathcal H_{[n]}\), define
\[
  \ov_n(\ket{\psi})=
  \max_{\ket{v_i}\in\mathcal H_i}
  \abs{\langle \psi|v_1\otimes\cdots\otimes v_n\rangle}^2.
\]
The maximum is attained, since the set of product unit vectors is compact in finite dimensions and the objective is continuous.  For \(n=1\), every state is product, so \(\ov_1(\ket{\psi})=1\).
\end{definition}
Next, let us introduce a notation to represent the worst-case acceptance probability for states having overlap $\omega$ with the nearest product state.
\begin{definition}[Dimension-free worst cases]\label{def:worst-case}
For \(n\ge2\) and \(\omega\in(0,1]\), define
\[
  \PT_n(\omega)
  =\sup\left\{\PT_n(\ket{\psi}):\ov_n(\ket{\psi})=\omega
  \right\}.
\]
We also use the monotone variant
\[
  \PT_n^{\le}(\omega)
  =\sup\left\{\PT_n(\ket{\psi}): \ov_n(\ket{\psi})\le\omega\right\},
\]
where the supremum is again over all finite local dimensions.
\end{definition}
As we will see, the worst case will be obtained by a bipartite state. Thus, the following lemma will be essential; it appears as \cite[Lemma~20]{harrow2010Efficient}, where it is noted to be implicit in earlier work of Wei and Goldbart~\cite{wei2003Geometric}, and we include a short proof to keep the paper self-contained.

\begin{lemma}[Bipartite overlap and acceptance]\label{lem:bipartite}
Let \(\ket\psi\in\mathcal{H}_A\otimes\mathcal{H}_B\) have Schmidt decomposition
\[
  \ket\psi=\sum_j\sqrt{\lambda_j}\ket{x_j}\ket{y_j},
  \qquad \lambda_j>0,
  \qquad \sum_j \lambda_j=1.
\]
Then
\[
  \ov_2(\ket{\psi})=\max_j \lambda_j
\]
and
\[
  \PT_2(\ket{\psi})=\frac{1}{2}\left(1+\sum_j \lambda_j^2\right).
\]
\end{lemma}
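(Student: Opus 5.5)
The plan is to prove the two claims separately: the overlap formula from the Schmidt form of the overlap functional, and the acceptance formula from Lemma~\ref{lem:purity-formula} with \(n=2\).

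\textbf{Overlap.} Fix unit vectors \(\ket u\in\mathcal H_A\) and \(\ket v\in\mathcal H_B\). Then
\[
  \braket{\psi}{u\otimes v}=\sum_j\sqrt{\lambda_j}\,\overline{\braket{u}{x_j}}\;\overline{\braket{v}{y_j}},
\]
and set \(a_j=\braket{x_j}{u}\) and \(b_j=\braket{y_j}{v}\). Since the \(\ket{x_j}\) are orthonormal, Bessel's inequality gives \(\sum_j\abs{a_j}^2\le1\), and likewise \(\sum_j\abs{b_j}^2\le1\). Hence
\[
  \abs{\braket{\psi}{u\otimes v}}\le\sqrt{\max_j\lambda_j}\sum_j\abs{a_j}\abs{b_j}\le\sqrt{\max_j\lambda_j},
\]
where the last step is Cauchy--Schwarz. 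Squaring gives \(\ov_2(\ket\psi)\le\max_j\lambda_j\). For the reverse inequality, take \(\ket u=\ket{x_{j^*}}\) and \(\ket v=\ket{y_{j^*}}\) with \(j^*\) an index maximizing \(\lambda_j\). This product vector has squared overlap exactly \(\lambda_{j^*}\), so equality holds.

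\textbf{Acceptance probability.} Lemma~\ref{lem:purity-formula} with \(n=2\) gives
\[
  \PT_2(\ket\psi)=\tfrac14\bigl(1+\Tr\rho_A^2+\Tr\rho_B^2+\Tr\rho_{AB}^2\bigr).
\]
The state is pure, so \(\Tr\rho_{AB}^2=1\). From the Schmidt decomposition, \(\rho_A=\sum_j\lambda_j\ket{x_j}\bra{x_j}\) and \(\rho_B=\sum_j\lambda_j\ket{y_j}\bra{y_j}\). Therefore \(\Tr\rho_A^2=\Tr\rho_B^2=\sum_j\lambda_j^2\). Substituting, \(\PT_2(\ket\psi)=\tfrac14\bigl(2+2\sum_j\lambda_j^2\bigr)=\tfrac12\bigl(1+\sum_j\lambda_j^2\bigr)\).

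I do not expect a real obstacle. The only point needing care is the upper bound on the overlap, and there the key step is simply to pull out \(\sqrt{\max_j\lambda_j}\) before applying Cauchy--Schwarz. Bessel's inequality (rather than Parseval) is used because the Schmidt vectors need not span their spaces.
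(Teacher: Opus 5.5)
Your proposal is correct and follows the paper's proof essentially step for step. For the overlap, you pull out $\sqrt{\max_j\lambda_j}$, apply Cauchy--Schwarz and Bessel, and get equality at the top Schmidt pair. For the acceptance probability, you use the $n=2$ case of Lemma~\ref{lem:purity-formula} with $\Tr\rho_{AB}^2=1$ and $\Tr\rho_A^2=\Tr\rho_B^2=\sum_j\lambda_j^2$.
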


\begin{proof}
For vectors $\ket a\in\mathcal{H}_A, \ket b\in\mathcal{H}_B$, write $\alpha_j=\langle x_j|a \rangle $ and $\beta_j=\langle y_j | b\rangle$, so that \(\sum_j\abs{\alpha_j}^2\le1\) and \(\sum_j\abs{\beta_j}^2\le1\).  If \(\lambda_{\max}\coloneqq\max_j \lambda_j\), then
\begin{align}
     \abs{\langle \psi | a \otimes b \rangle}
  =\abs{\sum_j\sqrt{\lambda_j}\,\alpha_j\beta_j}
  \le \sqrt{\lambda_{\max}}
     \left(\sum_j\abs{\alpha_j}^2\right)^{1/2}
     \left(\sum_j\abs{\beta_j}^2\right)^{1/2}
  \le \sqrt{\lambda_{\max}}.
\end{align}
Squaring both sides and noting that equality is obtained by taking \(\ket a=\ket{x_j}\) and \(\ket b=\ket{y_j}\) for an index \(j\) attaining \(\lambda_{\max}\), yields $\ov_2(\ket{\psi})=\lambda_{\max}$. Using Lemma~\ref{lem:purity-formula}, and noting that the eigenvalues of $\rho_A$ (and $\rho_B$) are the numbers $\lambda_j$, we obtain
\[
  \PT_2(\ket{\psi})=\frac14\left(2+2\Tr(\rho_A^2)\right)
  =\frac12\left(1+\sum_j \lambda_j^2\right).\qedhere
\]
\end{proof}

\section{The capped second moment}\label{sec:capped}
In this section, we isolate the elementary classical optimization behind the exact curve. We note that this calculation only becomes relevant in the low-overlap regime. This occurs because, when the cap is below $1/2$, no feasible distribution can concentrate all mass on two entries. Thus, an extremizer must have more than two nonzero entries. As we decrease $\omega$, it necessitates more non-zero entries in any feasible distribution, leading to the piecewise nature of the final curve.

\begin{definition}[Capped collision probability]\label{def:s}
For \(\omega\in(0,1]\), let
\[
  s(\omega)=
  \max\left\{\sum_jp_j^2:\ p_j\ge0,\ \sum_jp_j=1,\ p_j\le\omega\ \text{for every }j\right\},
\]
where the optimization ranges over finite probability vectors; the maximum is attained, as the next lemma shows.
\end{definition}

\begin{figure}[H]
\centering
\resizebox{0.91\linewidth}{!}{%
\begin{tikzpicture}[
  >=Stealth,
  line cap=round,
  line join=round,
  every node/.style={font=\small},
  panel/.style={draw=ptgray!26, fill=ptgray!3, rounded corners=4pt, line width=.75pt},
  axis/.style={draw=ptgray!72, line width=.76pt},
  capline/.style={draw=ptorange!95!black, densely dashed, line width=.95pt},
  satbar/.style={draw=ptblue!86!black, fill=ptlightblue, rounded corners=1.8pt, line width=.80pt},
  rembar/.style={draw=ptpurple!84!black, fill=ptlightpurple, rounded corners=1.8pt, line width=.80pt},
  ghost/.style={draw=ptgray!30, fill=ptgray!7, rounded corners=1.8pt, line width=.55pt},
  call/.style={draw=ptgray!35, fill=white, rounded corners=4pt, align=center,
    inner sep=6pt, line width=.68pt, font=\footnotesize, text=ptdark},
  tag/.style={font=\footnotesize\bfseries},
  note/.style={font=\footnotesize, text=ptgray},
  arrow/.style={-{Stealth[length=1.8mm]}, draw=ptgray!62, line width=.76pt}
]
\draw[panel] (-.62,-.5) rectangle (11.52,3.6);

\draw[axis] (.70,.60) -- (6.60,.60);
\draw[axis] (.70,.60) -- (.70,2.94);
\node[rotate=90, anchor=south, note] at (.22,1.70) {mass};
\node[note, anchor=north] at (3.62,.15) {coordinates};
\draw[capline] (.55,2.34) -- (6.45,2.34);
\node[anchor=east, tag, text=ptorange!92!black] at (.6,2.34) {cap $\omega$};

\foreach \x/\lab in {1.10/1,1.78/2,2.46/3,3.14/4} {
  \draw[satbar] (\x,.60) rectangle +(0.44,1.74);
  \node[note, anchor=north] at (\x+.22,.45) {$\lab$};
}
\node[font=\Large, text=ptgray] at (4.12,1.39) {$\cdots$};
\draw[satbar] (5.03,.60) rectangle +(0.44,1.74);
\node[note, anchor=north] at (5.25,.45) {$m$};
\draw[rembar] (5.93,.60) rectangle +(0.44,.86);
\node[note, anchor=north] at (6.15,.45) {};
\node[tag, text=ptpurple!84!black] at (6.15,1.67) {$r$};
\draw[arrow, draw=ptpurple!70!black] (6.15,1.55) -- (6.15,1.38);

\draw[decorate, decoration={brace, amplitude=4.3pt}, draw=ptblue!82!black, line width=.72pt]
  (1.02,2.74) -- (5.56,2.74)
  node[midway, yshift=.34cm, tag, text=ptblue!84!black]
  {$m=\lfloor1/\omega\rfloor$ saturated coordinates};

\draw[draw=ptgray!22, line width=.6pt] (6.92,.34) -- (6.92,3.26);
\node[call, text width=3.88cm] at (9.22,1.72)
  {greedy optimizer\\[-.2mm]
   $\displaystyle p^\star=(\omega,\ldots,\omega,r)$\\[.25mm]
   $\displaystyle r=1-m\omega$\\[3mm]
   collision probability\\[.3mm]
   $\displaystyle s(\omega)=m\omega^2+r^2$\\[.35mm]
   };
\end{tikzpicture}%
}
\caption{\textbf{The capped-simplex optimizer.}  Under the cap $p_j\le\omega$, collision probability is maximized by concentrating mass as much as the cap permits: fill $m=\lfloor1/\omega\rfloor$ entries to height $\omega$, then put the residual mass $r=1-m\omega$ on one final entry.}
\label{fig:capped-spectrum}
\end{figure}

\begin{lemma}[Capped-simplex optimizer]\label{lem:capped-simplex}
Let \(\omega\in(0,1]\), set \(m=\lfloor1/\omega\rfloor\), and set \(r=1-m\omega\).  Then
\[
  s(\omega)=m\omega^2+r^2.
\]
The value is achieved by the vector \((\omega,\ldots,\omega,r)\), with \(m\) copies of \(\omega\) and with the final coordinate omitted if \(r=0\).  In particular, \(s\) is nondecreasing and \(s(\omega)\le\omega\).
\end{lemma}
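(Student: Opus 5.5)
The plan is to prove the closed form by a majorization/convexity argument, then read off monotonicity and the bound $s(\omega)\le\omega$.

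First, feasibility. The vector $p^\star=(\omega,\dots,\omega,r)$ with $m$ copies of $\omega$ is nonnegative and sums to $m\omega+r=1$. Since $m=\lfloor 1/\omega\rfloor$, we have $0\le r=1-m\omega<\omega$, so every entry is at most $\omega$. Its collision probability is $m\omega^2+r^2$, hence $s(\omega)\ge m\omega^2+r^2$.

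Second, the matching upper bound, which is the only real step. Take any feasible $p$ with finitely many entries, sorted decreasingly, and padded with zeros to a common length $N\ge m+1$. I claim $p^\star$ majorizes $p$: for each $k$,
\[
\sum_{j\le k}p_j\le\min(k\omega,1)=\sum_{j\le k}p^\star_j.
\]
For $k\le m$ this holds because each $p_j\le\omega$. For $k\ge m+1$ it holds because the total mass is $1$. Since $x\mapsto\sum_j x_j^2$ is Schur-convex, $\sum_j p_j^2\le\sum_j (p^\star_j)^2$.

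If I prefer to avoid invoking Schur-convexity, I would use the elementary alternative. Whenever two entries satisfy $0<p_j\le p_i<\omega$, move mass $t=\min(\omega-p_i,\,p_j)$ from $p_j$ to $p_i$. This strictly increases $p_i^2+p_j^2$, because $(p_i+t)^2+(p_j-t)^2-p_i^2-p_j^2=2t(p_i-p_j+t)>0$. Each move either saturates an entry at $\omega$ or zeroes one out, so the process terminates, and at termination at most one entry is unsaturated. The sum constraint then forces exactly $m$ saturated entries plus the remainder $r$. I expect this step to be the main obstacle, though only in bookkeeping: the termination argument, and the fact that when $1/\omega$ is an integer ($r=0$) the final coordinate is omitted.

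Third, the consequences. For $s(\omega)\le\omega$, use $\sum_j p_j^2\le\max_j p_j\sum_j p_j\le\omega$ for every feasible $p$. For monotonicity, note that if $\omega\le\omega'$, then every vector feasible for cap $\omega$ is also feasible for cap $\omega'$, so $s(\omega)\le s(\omega')$ directly from the definition as a maximum. This avoids any piecewise computation, though one can also check continuity at the breakpoints $1/m$, where both adjacent formulas give $1/m$.
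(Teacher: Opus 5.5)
Your proof is correct. Your fallback argument is the paper's proof essentially verbatim: the same pairwise transfer $t=\min\{\omega-p_i,p_j\}$ (the paper's $\delta$), the same strict-increase computation, and the same termination by counting entries strictly between $0$ and $\omega$. The consequences are also derived the same way: monotonicity because the feasible sets are nested, and $s(\omega)\le\omega$ from $p_j^2\le\omega p_j$.

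Your primary route via majorization is a genuinely different way to organize the upper bound. The partial-sum check $\sum_{j\le k}p_j\le\min\{k\omega,1\}=\sum_{j\le k}p^\star_j$ relies on $m\omega\le 1<(m+1)\omega$. It shows in one line that $p^\star$ majorizes every feasible vector, with no termination bookkeeping. It also gives more: $p^\star$ is the maximizer for every Schur-convex objective, not just the collision probability. The cost is an appeal to the Hardy--Littlewood--P\'olya (Karamata) inequality, whereas the paper's argument is fully self-contained. The two proofs are the same idea at different levels of abstraction: the mass transfers the paper performs are exactly the elementary moves from which that majorization theorem is built.
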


\begin{proof}
Let \(p=(p_1,\ldots,p_d)\) be any feasible finite probability vector in $\mathbb{R}^d$.  We first show that, without decreasing \(\sum_jp_j^2\), it can be transformed into a vector with at most one coordinate strictly between \(0\) and \(\omega\), all other coordinates being either \(0\) or \(\omega\). To this end, suppose two coordinates satisfy \(0<p_i\le p_j<\omega\).  Then, let
\begin{align}
     \delta=\min\{p_i,\omega-p_j\}>0,
\end{align}
and replace \((p_i,p_j)\) by \((p_i-\delta,p_j+\delta)\).  Feasibility is preserved, and the change in the collision probability is
\[
  (p_i-\delta)^2+(p_j+\delta)^2-p_i^2-p_j^2
  =2\delta(p_j-p_i)+2\delta^2>0.
\]
Each such operation either sets one of the two chosen coordinates to \(0\) or saturates the other at \(\omega\).  Hence the number of positive unsaturated coordinates decreases by at least one, and the procedure terminates after finitely many steps.  The resulting feasible vector has collision probability at least that of \(p\) and has at most one nonzero unsaturated coordinate.

After this transformation, the resulting vector has \(q\) coordinates equal to \(\omega\), at most one further coordinate \(t\in[0,\omega)\), and all remaining coordinates equal to zero.  The normalization constraint gives
\[
  q\omega+t=1.
\]
Since \(t<\omega\), this forces \(q=\lfloor1/\omega\rfloor=m\) and \(t=1-m\omega=r\).  Therefore every feasible vector has collision probability at most \(m\omega^2+r^2\), and the displayed capped vector achieves this value.

Monotonicity follows directly from the definition: increasing the cap only enlarges the feasible set.  Finally, \(s(\omega)\le\omega\) because every feasible vector satisfies \(p_j^2\le\omega p_j\), thus \(\sum_jp_j^2\le\omega\sum_jp_j=\omega\).
\end{proof}

\begin{remark}[Why a classical lemma appears]
For a bipartite state the quantities \(p_j\) are squared Schmidt coefficients.  In the induction below they are instead refined branch weights produced by the first SWAP test.  The same capped second-moment problem controls both situations, which is why the final curve is independent of the number of parties.
\end{remark}

The following refinement lemma is a key step in the inductive proof below.  It says that subdividing masses cannot beat the best global cap.

\begin{lemma}[Capped refinement inequality]\label{lem:refinement}
Let \(w_1,\ldots,w_d\ge0\) satisfy \(\sum_i w_i=1\). Let \(\alpha_i\in(0,1]\) be such that
\[
  w_i\alpha_i\le\omega
\]
for every \(i\) with \(w_i>0\). Then
\[
  \sum_i w_i^2s(\alpha_i)\le s(\omega).
\]
\end{lemma}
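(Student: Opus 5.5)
The plan is to realize the left-hand side as the collision probability of a single feasible probability vector for the cap \(\omega\). Then Definition~\ref{def:s} bounds it by \(s(\omega)\) directly.

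First, for each index \(i\) with \(w_i>0\), Lemma~\ref{lem:capped-simplex} provides an optimizer \(q^{(i)}=(q^{(i)}_1,\ldots,q^{(i)}_{k_i})\) for \(s(\alpha_i)\). This is a finite probability vector with \(q^{(i)}_j\le\alpha_i\) for every \(j\) and \(\sum_j (q^{(i)}_j)^2=s(\alpha_i)\). Indices with \(w_i=0\) contribute \(0\) to the left-hand side and can be discarded.

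Next, define the refined vector \(p\) indexed by pairs \((i,j)\) via \(p_{ij}=w_i q^{(i)}_j\). Three properties need checking, each immediate:
\begin{itemize}[nosep]
\item nonnegativity is clear;
\item \(\sum_{i,j}p_{ij}=\sum_i w_i\sum_j q^{(i)}_j=\sum_i w_i=1\);
\item the cap holds, since \(p_{ij}\le w_i\alpha_i\le\omega\) by hypothesis.
\end{itemize}
So \(p\) is feasible for the problem defining \(s(\omega)\), which gives
\[
  \sum_i w_i^2 s(\alpha_i)=\sum_i w_i^2\sum_j (q^{(i)}_j)^2=\sum_{i,j}p_{ij}^2\le s(\omega).
\]

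There is no real obstacle here. The only care needed is that the optimizers attaining \(s(\alpha_i)\) exist as finite vectors, which Lemma~\ref{lem:capped-simplex} guarantees. One could even avoid attainment by using near-optimal feasible vectors and passing to the limit.
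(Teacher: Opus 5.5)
Your proof is correct and is essentially the paper's argument: choose optimizers \(q^{(i)}\) for \(s(\alpha_i)\) (existence is given by Lemma~\ref{lem:capped-simplex}), form the refined vector \(w_i q^{(i)}_j\), and check that it is feasible for the cap \(\omega\). Its collision probability is then exactly \(\sum_i w_i^2 s(\alpha_i)\), so Definition~\ref{def:s} gives the bound \(s(\omega)\).
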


\begin{proof}
Indices with \(w_i=0\) contribute nothing and may be ignored. For each remaining \(i\), choose a finite probability vector \(q^{(i)}=(q^{(i)}_1,q^{(i)}_2,\dots)\) attaining \(s(\alpha_i)\). Thus \(q^{(i)}_j\le\alpha_i\) for every \(j\), \(\sum_jq^{(i)}_j=1\), and \(\sum_j(q^{(i)}_j)^2=s(\alpha_i)\). Next, define a refined probability vector \(v\) indexed by pairs \((i,j)\) as \(v_{ij}=w_iq^{(i)}_j\). Then \(\sum_{i,j}v_{ij}=1\), and
\[
  v_{ij}=w_iq^{(i)}_j\le w_i\alpha_i\le\omega.
\]
Therefore \(v\) is feasible for \(s(\omega)\), and
\[
  s(\omega)
  \ge \sum_{i,j}v_{ij}^2
  =\sum_i w_i^2\sum_j(q^{(i)}_j)^2
  =\sum_i w_i^2s(\alpha_i).\qedhere
\]
\end{proof}

\section{The lower bound}\label{sec:lower}
To prove our main result, Theorem~\ref{thm:main-intro}, we first prove that the worst-case acceptance probability is at least $(1+s(\omega))/2$. This construction provides important intuition for the complementary upper-bound analysis. It is attained by an explicit bipartite state whose squared Schmidt coefficients are the capped optimizer from Section~\ref{sec:capped}.  Thus any possible multipartite improvement in the upper bound would have to beat an already sharp bipartite obstruction.

\begin{proposition}[Extremizing construction]\label{prop:lower}
For every \(n\ge2\) and every \(\omega\in(0,1]\),
\[
  \PT_n(\omega)\ge\frac{1+s(\omega)}2.
\]
Moreover, the same lower bound holds for \(\PT_n^{\le}(\omega)\).
\end{proposition}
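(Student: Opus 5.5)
The plan is to exhibit an explicit state and compute both of its parameters directly. Fix $\omega\in(0,1]$, set $m=\lfloor 1/\omega\rfloor$ and $r=1-m\omega$. Let $d=m+1$ if $r>0$ and $d=m$ otherwise, and take the capped optimizer $p^\star=(\omega,\ldots,\omega,r)$ from Lemma~\ref{lem:capped-simplex}. Choose $\mathcal H_1=\mathcal H_2=\C^d$ and define the bipartite state
\[
  \ket{\phi}=\sum_{j=1}^d\sqrt{p^\star_j}\,\ket{j}\ket{j}.
\]
By construction its squared Schmidt coefficients are exactly $p^\star$. Lemma~\ref{lem:bipartite} then gives $\ov_2(\ket\phi)=\max_j p^\star_j=\omega$; this uses $r<\omega$, or $r=0$, in which case the last entry is omitted. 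The same lemma gives $\PT_2(\ket\phi)=\tfrac12(1+\sum_j (p^\star_j)^2)=\tfrac12(1+s(\omega))$.

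To pass to $n$ parties, I will set $\ket\psi=\ket\phi\otimes\ket{0}^{\otimes(n-2)}$, with $\mathcal H_3=\cdots=\mathcal H_n=\C$ (or any dimension, using a fixed unit vector). There are two computations.

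First, the acceptance probability. The projector $\Pi_{\Prod,n}$ factorizes across registers, and each extra register contributes the factor $\norm{\Pi\ket0\ket0}^2=1$. Hence $\PT_n(\ket\psi)=\PT_2(\ket\phi)$. One could equally read this off Lemma~\ref{lem:purity-formula}, since $\Tr(\rho_S^2)$ depends only on $S\cap\{1,2\}$.

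Second, the overlap, where I need $\ov_n(\ket\psi)=\omega$. For any product vector $v_1\otimes\cdots\otimes v_n$,
\[
  \abs{\braket{\psi}{v_1\otimes\cdots\otimes v_n}}^2=\abs{\braket{\phi}{v_1\otimes v_2}}^2\prod_{i\ge3}\abs{\braket{0}{v_i}}^2\le \ov_2(\ket\phi),
\]
because each $\abs{\braket{0}{v_i}}\le1$. Taking $v_i=\ket0$ for $i\ge3$ together with an optimal pair $(v_1,v_2)$ attains equality.

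These two computations give $\PT_n(\omega)\ge\PT_n(\ket\psi)=\tfrac12(1+s(\omega))$. Since $\ov_n(\ket\psi)=\omega\le\omega$, the same witness is admissible in the supremum defining $\PT_n^{\le}(\omega)$, so the bound also holds for the monotone variant.

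The only step needing any care is the overlap factorization for the padded state. It is elementary, but it is where one must actually use that the extra factors are product, so that they cannot increase the overlap. Also, the edge case $r=0$ (when $1/\omega$ is an integer) must be handled by dropping the zero Schmidt coefficient, as Lemma~\ref{lem:bipartite} assumes $\lambda_j>0$.
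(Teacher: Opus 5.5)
Your proposal is correct and follows essentially the same route as the paper: the bipartite state whose squared Schmidt coefficients are the capped optimizer $(\omega,\ldots,\omega,r)$, evaluated via Lemma~\ref{lem:bipartite}, then tensored with fixed product states on registers $3,\ldots,n$, which changes neither the overlap nor the acceptance probability. Your explicit factorization of the overlap and your handling of the $r=0$ case fill in details the paper states only briefly.
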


\begin{proof}
Let \(m=\lfloor1/\omega\rfloor\) and \(r=1-m\omega\).  On \(\C^{m+1}\otimes\C^{m+1}\), with the final basis vector omitted if \(r=0\), consider the state
\begin{equation}\label{eq:extremizer}
  \ket{\psi_\omega}
  =\sum_{j=1}^m\sqrt\omega\ket j\ket j
   +\sqrt r\ket{m+1}\ket{m+1},
\end{equation}
with $m$ squared Schmidt coefficients of $\omega$ and, if \(r>0\), one copy of $r<\omega$.  By Lemma~\ref{lem:bipartite},
\[
  \ov_2(\ket{\psi_\omega})=\omega
\]
and
\[
  \PT_2(\psi_\omega)
  =\frac12\left(1+m\omega^2+r^2\right)
  =\frac{1+s(\omega)}2.
\]
For \(n>2\), tensor \(\ket{\psi_\omega}\) with arbitrary product states on registers \(3,\ldots,n\).  The product overlap is unchanged, since the extra product registers contribute overlap at most one, a bound that can be saturated. The product-test acceptance probability is also unchanged, since the additional local SWAP tests accept identical pure states with probability one.  Hence the construction proves the claimed lower bounds for both exact overlap and overlap at most \(\omega\).
\end{proof}

\section{A sharp first-SWAP reduction}\label{sec:first-swap}

This section contains the only genuinely quantum step in the proof.  The inductive proof due to Soleimanifar and Wright~\cite{soleimanifar2022Testing} proceeds by taking a Schmidt decomposition between the first and remaining registers. They keep only the largest diagonal contribution and ``charitably bound the probability by $1$'' in all remaining branches, diagonal and off-diagonal alike.  The key insight leading to the exact curve is that one must avoid discarding the smaller diagonal branches: below \(\omega=1/2\), the extremal capped spectrum necessarily has three or more nonzero entries. The reduction below keeps all diagonal branches and pays only for the off-diagonal branches by the operator norm of the remaining projector. It turns out that this is sufficient to fully determine the desired worst-case acceptance probability.

Let \(\ket\psi\in\mathcal H_1\otimes\mathcal{H}_{[2:n]}\), where $\mathcal{H}_{[2:n]}\coloneqq \mathcal H_2\otimes\cdots\otimes\mathcal H_n$.  Write its Schmidt decomposition across the cut \(\mathcal H_1\mid\mathcal{H}_{[2:n]}\) as
\begin{align}\label{eq:first-cut-schmidt}
  \ket\psi=\sum_i\sqrt{\lambda_i}\ket{a_i}\ket{b_i},
\end{align}
where \(\lambda_i>0\), \(\sum_i\lambda_i=1\), and the \(\ket{a_i}\)'s and \(\ket{b_i}\)'s are orthonormal families in $\mathcal{H}_1$ and $\mathcal{H}_{[2:n]}$, respectively.

\begin{lemma}[First-SWAP reduction]\label{lem:first-swap}
For \(n\ge2\), in the setting of \eqref{eq:first-cut-schmidt},
\begin{equation}\label{eq:first-swap-reduction}
  \PT_n(\ket{\psi})
  \le
  \sum_i\lambda_i^2\PT_{n-1}(\ket{b_i})
  +\sum_{i<j}\lambda_i\lambda_j.
\end{equation}
\end{lemma}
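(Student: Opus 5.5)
Write $\ket\psi^{\otimes2}=\sum_{i,j}\sqrt{\lambda_i\lambda_j}\,\ket{a_i}\ket{a_j}\otimes\ket{b_i}\ket{b_j}$, with register $\mathcal H_1$ of both copies collected first. The product projector factors as $\Pi_{\Prod,n}=\Pi_{\mathcal H_1}\otimes\Pi'$, where $\Pi'=\Pi_{\Prod,n-1}$ acts on the two copies of $\mathcal H_{[2:n]}$.

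The plan is to apply $\Pi_{\mathcal H_1}$ first and reorganize by symmetric pairs. Since $\Pi_{\mathcal H_1}\ket{a_i}\ket{a_i}=\ket{a_i}\ket{a_i}$, the diagonal terms are untouched. For $i\ne j$, write $\ket{s_{ij}}=(\ket{a_i}\ket{a_j}+\ket{a_j}\ket{a_i})/\sqrt2$; then $\Pi_{\mathcal H_1}\ket{a_i}\ket{a_j}=\ket{s_{ij}}/\sqrt2$. Grouping the $(i,j)$ and $(j,i)$ terms gives
\[
  (\Pi_{\mathcal H_1}\otimes\Pi')\ket\psi^{\otimes2}
  =\sum_i\lambda_i\ket{a_i}\ket{a_i}\otimes\Pi'\ket{b_i}\ket{b_i}
   +\sum_{i<j}\sqrt{\lambda_i\lambda_j}\,\ket{s_{ij}}\otimes\Pi'\,\frac{\ket{b_i}\ket{b_j}+\ket{b_j}\ket{b_i}}{\sqrt2}.
\]

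The key observation is that the first-register vectors $\{\ket{a_i}\ket{a_i}\}_i\cup\{\ket{s_{ij}}\}_{i<j}$ are pairwise orthonormal, since the $\ket{a_i}$ are orthonormal. The squared norm therefore splits exactly as a sum of squared norms of the second-register parts, with no cross terms. The diagonal branch $i$ contributes $\lambda_i^2\norm{\Pi'\ket{b_i}^{\otimes2}}^2=\lambda_i^2\PT_{n-1}(\ket{b_i})$. The off-diagonal branch $(i,j)$ contributes $\lambda_i\lambda_j\norm{\Pi'(\ket{b_i}\ket{b_j}+\ket{b_j}\ket{b_i})/\sqrt2}^2$. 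Since $\Pi'$ is a projector with operator norm at most $1$, and $(\ket{b_i}\ket{b_j}+\ket{b_j}\ket{b_i})/\sqrt2$ is a unit vector (the $\ket{b_i}$ are orthonormal and $i\ne j$), each off-diagonal contribution is at most $\lambda_i\lambda_j$. Summing the two kinds of contribution gives \eqref{eq:first-swap-reduction}.

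The only delicate point is the bookkeeping. The $(i,j)$ and $(j,i)$ terms must be combined before taking norms, so that the first-register states are exactly orthonormal and the off-diagonal weight comes out as $\lambda_i\lambda_j$ per unordered pair. Bounding the ordered terms separately could lose a factor. For $n=2$ the statement is consistent with $\PT_1\equiv1$, since the bound becomes $\sum_i\lambda_i^2+\sum_{i<j}\lambda_i\lambda_j=\frac12(1+\sum_i\lambda_i^2)$.
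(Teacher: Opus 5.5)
Your proposal is correct and follows the paper's proof essentially step for step. You apply \(\Pi_{\mathcal H_1}\) first, group the \((i,j)\) and \((j,i)\) terms into symmetrized first-register vectors that are orthonormal, read off the diagonal branches as \(\lambda_i^2\PT_{n-1}(\ket{b_i})\), and bound each off-diagonal branch by \(\lambda_i\lambda_j\) using \(\Pi_{\mathrm{rest}}\le\mathbb{I}\) on a unit vector.
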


\begin{proof}
Let \(\Pi_1\) be the SWAP-test accepting projector on the two copies of \(\mathcal H_1\), and let
\[
  \Pi_{\mathrm{rest}}=\Pi_2 \otimes\cdots\otimes\Pi_n
\]
be the product-test accepting projector on the last \(n-1\) pairs of local registers.  Thus \(\Pi_{\mathrm{rest}}\) is not the full symmetric projector on \(\mathcal{H}_{[2:n]}\otimes\mathcal{H}_{[2:n]}\), but rather the tensor product of the local symmetric projectors.  
Expanding two copies of \eqref{eq:first-cut-schmidt} and applying \(\Pi_1\) gives
\begin{align}\label{eq:after-first-swap}
  \Pi_1\ket\psi^{\otimes2}
  &=
  \sum_i\lambda_i\ket{a_i}^{\otimes2}\ket{b_i}^{\otimes2}
  +
  \sum_{i<j}\sqrt{\lambda_i\lambda_j}
  \left(\frac{\ket{a_i}\ket{a_j}+\ket{a_j}\ket{a_i}}{\sqrt2}\right)
  \left(\frac{\ket{b_i}\ket{b_j}+\ket{b_j}\ket{b_i}}{\sqrt2}\right).
\end{align}
The first-register vectors
\[
  \ket{a_i}^{\otimes2}
  \quad\text{and}\quad
  \frac{\ket{a_i}\ket{a_j}+\ket{a_j}\ket{a_i}}{\sqrt2}\quad(i<j)
\]
are pairwise orthonormal.  Since \(\Pi_{\mathrm{rest}}\) acts only on the remaining registers, applying it preserves orthogonality between the branches indexed by \(i\) and by \(i<j\).  Therefore
\begin{align}
  \PT_n(\ket{\psi})
  &=\norm{\Pi_{\mathrm{rest}}\Pi_1\ket\psi^{\otimes2}}^2 \\
  &=\sum_i\lambda_i^2
    \norm{\Pi_{\mathrm{rest}}\ket{b_i}^{\otimes2}}^2
    +
    \sum_{i<j}\lambda_i\lambda_j
    \norm{\Pi_{\mathrm{rest}}
      \frac{\ket{b_i}\ket{b_j}+\ket{b_j}\ket{b_i}}{\sqrt2}}^2 \\
  &\le
    \sum_i\lambda_i^2\PT_{n-1}(\ket{b_i})
    +\sum_{i<j}\lambda_i\lambda_j,
\end{align}
where we have used the fact that $\Pi_{\mathrm{rest}}\leq \mathbb{I}$ and the normalization of the state.
\end{proof}

\begin{remark}
The inequality in Lemma~\ref{lem:first-swap} is the only lossy quantum estimate in the argument: the off-diagonal branches are merely bounded by the operator norm of the remaining projector.  The theorem is nevertheless sharp because the extremal examples are bipartite and have no genuinely multipartite qualities to exploit.
\end{remark}

\begin{lemma}[Branch-overlap constraint]\label{lem:branch-overlap}
In the setting of \eqref{eq:first-cut-schmidt}, let
\[
  \omega=\ov_n(\ket{\psi}),
  \qquad
  \phi_i=\ov_{n-1}(\ket{b_i}).
\]
Then, for every $i$, we have
\[
  \lambda_i\phi_i\le\omega.
\]
\end{lemma}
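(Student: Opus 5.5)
The plan is to exhibit, for each fixed index \(i\), an explicit \(n\)-partite product vector whose squared overlap with \(\ket\psi\) equals \(\lambda_i\phi_i\). Since \(\omega\) is the maximum over all product vectors, this gives the bound immediately.

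First, use Definition~\ref{def:overlap} applied to \(\ket{b_i}\in\mathcal H_{[2:n]}\) to pick unit vectors \(\ket{v_2},\ldots,\ket{v_n}\) attaining the maximum. Thus
\[
  \abs{\langle b_i|v_2\otimes\cdots\otimes v_n\rangle}^2=\phi_i.
\]
Set \(\ket{V}=\ket{v_2}\otimes\cdots\otimes\ket{v_n}\). As the first-register factor, take \(\ket{v_1}=\ket{a_i}\), the \(i\)-th Schmidt vector in \eqref{eq:first-cut-schmidt}.

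Next, compute the overlap of \(\ket\psi\) with \(\ket{a_i}\otimes\ket V\). By orthonormality of the family \(\{\ket{a_k}\}\),
\[
  \langle\psi|a_i\otimes V\rangle=\sum_k\sqrt{\lambda_k}\,\langle a_k|a_i\rangle\langle b_k|V\rangle=\sqrt{\lambda_i}\,\langle b_i|V\rangle.
\]
Squaring the modulus gives \(\lambda_i\phi_i\). Since \(\ket{a_i}\otimes\ket{v_2}\otimes\cdots\otimes\ket{v_n}\) is a unit product vector, the definition of \(\ov_n\) yields \(\omega\ge\lambda_i\phi_i\).

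There is no real obstacle here. The only points to check are that the maximizer for \(\phi_i\) exists, which Definition~\ref{def:overlap} records, and that the cross terms vanish, which follows from orthonormality of the \(\ket{a_k}\). For \(n=2\), \(\phi_i=\ov_1=1\), and the bound reduces to \(\lambda_i\le\omega\), which is consistent with Lemma~\ref{lem:bipartite}.
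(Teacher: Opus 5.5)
Your proposal is correct and follows the paper's proof essentially verbatim: you tensor the first-register Schmidt vector \(\ket{a_i}\) with a product maximizer for \(\ket{b_i}\), use orthonormality of the \(\ket{a_k}\) to reduce the overlap to \(\sqrt{\lambda_i}\,\langle b_i|V\rangle\), and conclude from the definition of \(\ov_n\). Your side check of the case \(n=2\) against Lemma~\ref{lem:bipartite} is a consistency check that the paper does not include.
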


\begin{proof}
Let $\ket{v_i}=\ket{v_{i,2}}\otimes\cdots\otimes\ket{v_{i,n}}\in\mathcal{H}_{[2:n]}$ be a product state attaining $\abs{\langle b_i|v_i\rangle}^2=\phi_i$. The vector $\ket{a_i}\otimes\ket{v_i}$ is a product state on all $n$ registers. Since the $\ket{a_j}$'s are orthonormal, expanding $\ket\psi=\sum_j\sqrt{\lambda_j}\,\ket{a_j}\otimes\ket{b_j}$ gives
\[
  \langle\psi|\bigl(\ket{a_i}\otimes\ket{v_i}\bigr)
  =\sum_j\sqrt{\lambda_j}\,\langle a_j|a_i\rangle\,\langle b_j|v_i\rangle
  =\sqrt{\lambda_i}\,\langle b_i|v_i\rangle,
\]
so
\[
  \bigl|\langle\psi|\bigl(\ket{a_i}\otimes\ket{v_i}\bigr)\bigr|^2
  =\lambda_i\abs{\langle b_i|v_i\rangle}^2
  =\lambda_i\phi_i.
\]
By definition, $\omega=\ov_n(\ket{\psi})$ is the maximum of $\abs{\langle\psi|v\rangle}^2$ over product states $\ket v$. Since $\ket{a_i}\otimes\ket{v_i}$ is one such product state,
\[
  \lambda_i\phi_i
  =\bigl|\langle\psi|\bigl(\ket{a_i}\otimes\ket{v_i}\bigr)\bigr|^2
  \le\ov_n(\ket{\psi})=\omega,
\]
as claimed.
\end{proof}

\section{The upper bound}\label{sec:upper}

We now prove the matching upper bound by induction on the number of parties. Lemma~\ref{lem:first-swap} converts the product-test probability into diagonal branch contributions plus off-diagonal mass.  Lemma~\ref{lem:branch-overlap} transfers the global overlap promise to each branch.  Lemma~\ref{lem:refinement} then turns the branch data into one capped probability vector and applies the definition of \(s(\omega)\).

\begin{proposition}[Upper bound on $\PT_n(\ket{\psi})$]\label{prop:upper}
For every \(n\ge1\), every finite-dimensional \(n\)-partite pure state \(\ket\psi\), and \(\omega=\ov_n(\ket{\psi})\),
\[
  \PT_n(\ket{\psi})\le\frac{1+s(\omega)}2.
\]
\end{proposition}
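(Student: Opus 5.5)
We argue by induction on \(n\). For the base case \(n=1\), we have \(\ov_1(\ket\psi)=1\) and \(\PT_1(\ket\psi)=1\), and \(s(1)=1\) by Lemma~\ref{lem:capped-simplex} (\(m=1\), \(r=0\)). So the bound reads \(1\le 1\) and holds. (One could equally start at \(n=2\) from Lemma~\ref{lem:bipartite}, since \(\sum_j\lambda_j^2\le s(\max_j\lambda_j)\) by the definition of \(s\). Starting at \(n=1\) keeps the induction uniform.)

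For the inductive step, take \(n\ge2\) and assume the bound for \(n-1\) parties. Take the Schmidt decomposition \eqref{eq:first-cut-schmidt} across \(\mathcal H_1\mid\mathcal H_{[2:n]}\) and set \(\phi_i=\ov_{n-1}(\ket{b_i})\in(0,1]\). Lemma~\ref{lem:first-swap} together with the inductive hypothesis gives
\[
  \PT_n(\ket\psi)\le\sum_i\lambda_i^2\,\frac{1+s(\phi_i)}2+\sum_{i<j}\lambda_i\lambda_j .
\]
Using \(\sum_{i<j}\lambda_i\lambda_j=\tfrac12\bigl(1-\sum_i\lambda_i^2\bigr)\), the \(\tfrac12\sum_i\lambda_i^2\) terms cancel, and the right-hand side becomes
\[
  \tfrac12+\tfrac12\sum_i\lambda_i^2 s(\phi_i).
\]

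It remains to show \(\sum_i\lambda_i^2 s(\phi_i)\le s(\omega)\). Lemma~\ref{lem:branch-overlap} supplies \(\lambda_i\phi_i\le\omega\) for every \(i\). The weights \(w_i=\lambda_i\) form a probability vector and \(\alpha_i=\phi_i\in(0,1]\), so Lemma~\ref{lem:refinement} applies directly and yields exactly this inequality. This closes the induction.

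The only delicate points are bookkeeping. First, \(\phi_i>0\) holds automatically, since overlaps of unit vectors with product states are positive at the maximizer. Second, Lemma~\ref{lem:refinement} needs attained maximizers for \(s(\alpha_i)\), which Lemma~\ref{lem:capped-simplex} provides. The substantive quantum content was already isolated in Lemma~\ref{lem:first-swap}, so no step here should be a real obstacle. The main thing to verify is the algebraic cancellation of the diagonal \(\lambda_i^2/2\) terms against the off-diagonal mass.
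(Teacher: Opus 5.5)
Your proposal is correct and follows the paper's proof essentially step for step. It uses induction from the trivial base case $n=1$, Lemma~\ref{lem:first-swap} plus the inductive hypothesis, the cancellation $\frac12\sum_i\lambda_i^2+\sum_{i<j}\lambda_i\lambda_j=\frac12$, and then Lemma~\ref{lem:branch-overlap} feeding Lemma~\ref{lem:refinement}; your checks that $\phi_i>0$ and that the maximizers for $s(\alpha_i)$ are attained are correct bookkeeping the paper leaves implicit.
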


\begin{proof}
We prove the stated pointwise inequality for all \(n\ge1\) by induction on \(n\); for \(n=1\) we use the convention \(\ov_1(\ket{\psi})=1\), so the only possible value of \(\omega\) is \(1\).  If \(n=1\), then \(\ov_1(\ket{\psi})=1\), \(\PT_1(\psi)=1\), and \((1+s(1))/2=1\).

Assume the statement for \((n-1)\)-partite states, and let \(\ket\psi\in\mathcal H_1\otimes\cdots\otimes\mathcal H_n\).  Write the Schmidt decomposition across \(\mathcal H_1\mid(\mathcal H_2\otimes\cdots\otimes\mathcal H_n)\) as in \eqref{eq:first-cut-schmidt}.  Let
\[
  \omega=\ov_n(\ket{\psi}),
  \qquad
  \phi_i=\ov_{n-1}(\ket{b_i}).
\]
We may then write
\begin{align*}
  \PT_n(\ket{\psi})
  &\le
  \sum_i\lambda_i^2\PT_{n-1}(b_i)
  +\sum_{i<j}\lambda_i\lambda_j, \quad \text{Lemma}~\ref{lem:first-swap} \\
  &\le
  \sum_i\lambda_i^2\frac{1+s(\phi_i)}2
  +\sum_{i<j}\lambda_i\lambda_j, \quad \text{induction hypothesis},\\
  &= \left( \frac{1}{2} \sum_i \lambda_i^2 + \sum_{i < j} \lambda_i \lambda_j \right)+ \frac{1}{2}\sum_i \lambda_i^2 s(\phi_i).
\end{align*}
Using $\sum_i\lambda_i=1$, one may verify that
\begin{align*}
      \frac12\sum_i\lambda_i^2+
  \sum_{i<j}\lambda_i\lambda_j
  =\frac12\left(\sum_i\lambda_i\right)^2
  =\frac12,
\end{align*}
which yields the upper bound
\begin{equation}\label{eq:upper-before-refinement}
  \PT_n(\ket{\psi})
  \le
  \frac12+\frac12\sum_i\lambda_i^2s(\phi_i).
\end{equation}
By Lemma~\ref{lem:branch-overlap}, \(\lambda_i\phi_i\le\omega\) for every \(i\).  This is the exact hypothesis needed to apply the capped refinement inequality, Lemma~\ref{lem:refinement}.  Doing so yields
\begin{align*}
    \sum_i\lambda_i^2s(\phi_i)\le s(\omega).
\end{align*}
Substituting this into \eqref{eq:upper-before-refinement} proves
\[
  \PT_n(\ket{\psi})\le\frac12+\frac12s(\omega)=\frac{1+s(\omega)}2.\qedhere
\]
\end{proof}

We have now proved all of the essential ingredients in the proof of our main theorem. 
\begin{proof}[Proof of Theorem~\ref{thm:main-intro}]
The lower bound is Proposition~\ref{prop:lower}, and the upper bound is Proposition~\ref{prop:upper}.  The closed forms \eqref{eq:main-piecewise-intro} and \eqref{eq:main-piece-m-intro} follow from Lemma~\ref{lem:capped-simplex}.
\end{proof}

The same proof also gives the monotone version used in property-testing soundness statements.

\begin{corollary}[At-most-overlap version]\label{cor:at-most}
For every \(n\ge2\) and every \(\omega\in(0,1]\),
\[
  \PT_n^{\le}(\omega)=\frac{1+s(\omega)}2.
\]
Equivalently, every state satisfying \(\ov_n(\ket{\psi})\le\omega\) satisfies
\[
  \PT_n(\ket{\psi})\le\frac{1+s(\omega)}2.
\]
\end{corollary}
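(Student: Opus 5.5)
The corollary follows by pairing the pointwise upper bound of Proposition~\ref{prop:upper} with the monotonicity of \(s\) from Lemma~\ref{lem:capped-simplex}. The lower bound is already available: Proposition~\ref{prop:lower} states that the extremizing state \(\ket{\psi_\omega}\) in \eqref{eq:extremizer}, tensored with product states on the remaining registers, has \(\ov_n=\omega\) and acceptance probability \((1+s(\omega))/2\). It is therefore feasible for the supremum defining \(\PT_n^{\le}(\omega)\), which gives \(\PT_n^{\le}(\omega)\ge(1+s(\omega))/2\).

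For the upper bound, fix any finite-dimensional \(n\)-partite pure state \(\ket\psi\) with \(\omega'\coloneqq\ov_n(\ket{\psi})\le\omega\). We first note that \(\omega'>0\). Expanding \(\ket\psi\) in a product basis, some coefficient is nonzero, so \(\ket\psi\) has positive overlap with some product basis vector. Hence \(\omega'\in(0,1]\), and Proposition~\ref{prop:upper} applies to give \(\PT_n(\ket{\psi})\le(1+s(\omega'))/2\).

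Next, Lemma~\ref{lem:capped-simplex} says \(s\) is nondecreasing, since a larger cap only enlarges the feasible set. So \(\omega'\le\omega\) gives \(s(\omega')\le s(\omega)\), and therefore \(\PT_n(\ket{\psi})\le(1+s(\omega))/2\). This proves the displayed equivalent statement. Taking the supremum over all such \(\ket\psi\) in all finite local dimensions gives \(\PT_n^{\le}(\omega)\le(1+s(\omega))/2\), and combined with the lower bound this is the claimed equality.

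There is no real obstacle here; the substance is all in Proposition~\ref{prop:upper}. The only points needing care are the positivity of the overlap, so that \(s(\omega')\) is defined on its stated domain, and the use of monotonicity of \(s\), which is what makes the exact-overlap bound also cover states with overlap at most \(\omega\).
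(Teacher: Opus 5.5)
Your proposal is correct and matches the paper's proof: Proposition~\ref{prop:upper} at the actual overlap \(\omega'\), together with the monotonicity of \(s\) from Lemma~\ref{lem:capped-simplex}, gives the upper bound, and the exact-overlap extremizer of Proposition~\ref{prop:lower} gives the lower bound. Your extra check that \(\omega'>0\), so that \(s(\omega')\) lies in its stated domain, is a point the paper leaves implicit; it is correct.
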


\begin{proof}
If \(\ov_n(\ket{\psi})=\omega'\le\omega\), then Proposition~\ref{prop:upper} gives
\[
  \PT_n(\ket{\psi})\le\frac{1+s(\omega')}2\le\frac{1+s(\omega)}2,
\]
where the last step uses monotonicity of \(s\).  Proposition~\ref{prop:lower} supplies a state of exact overlap \(\omega\) attaining the displayed value.
\end{proof}

\section{Improved parameter in the Harrow--Montanaro collapse}\label{sec:qma-collapse}

Harrow and Montanaro use the product test to convert a $k$-Merlin protocol into a two-Merlin protocol~\cite{harrow2010Efficient}.  This section records the numerical improvement obtained by substituting the exact product-test curve for the coarse all-regimes estimate in their Lemma~5; the separability statement recorded below corresponds to their Lemma~6.  Following their notation, let $\QMA_m(k)_{\sigma,c}$ denote $k$-Merlin protocols with $m$-qubit messages, soundness $\sigma$, and completeness $c$.

The point of the exact analysis needed here is the following global rejection bound.  If a $k$-partite state has closest-product overlap at most $1-\eps$, with $0\le\eps<1$, then Corollary~\ref{cor:at-most} gives
\begin{equation}\label{eq:hm-rejection-exact}
  1-\PT_k(\psi)
  \ge
  1-\frac{1+s(1-\eps)}2
  =\frac{1-s(1-\eps)}2.
\end{equation}
Since $s(\omega)\le\omega$, this implies the sharp uniform linear estimate
\begin{equation}\label{eq:hm-rejection-linear}
  1-\PT_k(\psi)\ge \frac{\eps}{2}.
\end{equation}
The constant $1/2$ here is optimal: on each piece $\omega\in(1/(m+1),1/m]$ one has $\omega-s(\omega)=(1-m\omega)\bigl((m+1)\omega-1\bigr)$, which vanishes precisely at the reciprocal points $\omega=1/d$, where the $d$-dimensional maximally entangled state has $\eps=1-1/d$ and is rejected with probability exactly $\eps/2$.  This replaces the Harrow--Montanaro all-regimes estimate $1-\PT_k(\psi)\ge(11/512)\eps$ in the two-Merlin simulation.

\begin{proposition}[Improved two-Merlin simulation parameter]\label{prop:hm-improved-parameter}
For every $m$, every $k\ge2$, and every $0\le \sigma<c\le1$, the Harrow--Montanaro simulation protocol has completeness at least
\[
  c^\star=\frac{1+c}{2}
\]
and soundness at most
\[
  \sigma^\star=1-\frac{(1-\sigma)^2}{4}.
\]
Its accepting operator is separable across the two Merlins.  Consequently, whenever $c^\star>\sigma^\star$,
\begin{equation}\label{eq:hm-improved-inclusion}
  \QMA_m(k)_{\sigma,c}
  \subseteq
  \QMA_{km}(2)_{\sigma^\star,c^\star}.
\end{equation}
In applications to the collapse, one first amplifies the original protocol so that this final promise gap is positive.
\end{proposition}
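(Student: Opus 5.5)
The plan is to run the Harrow--Montanaro simulation unchanged and swap their product-test soundness estimate for the sharp linear rejection bound \eqref{eq:hm-rejection-linear}. The protocol works as follows. Each of the two Merlins is asked for a state on $k$ registers of $m$ qubits each, intended to be $\ket{\psi_1}\otimes\cdots\otimes\ket{\psi_k}$. With probability $1/2$ the verifier runs the product test on the two proofs, register by register. With probability $1/2$ it runs the original $k$-Merlin verifier $V$ on the first proof. Both branches are measurements that are separable across the two Merlins, since the product test is $\bigotimes_i\Pi_{\mathcal H_i}$ and each symmetric projector $\frac12(\mathbb I+\SWAP)$ is separable. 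This gives separability of the accepting operator, as in their Lemma~6.

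\textbf{Completeness.} Honest Merlins send identical product states. The product test then accepts with probability $1$ and $V$ accepts with probability at least $c$, so the overall acceptance is at least $(1+c)/2=c^\star$.

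\textbf{Soundness, main step.} Following Harrow--Montanaro, I would first reduce to the case of two identical pure proofs $\ket\psi$, where the acceptance probability is $\frac12\PT_k(\ket\psi)+\frac12\bra\psi V\ket\psi$. The general case of distinct or mixed proofs goes through their Lemma~5 argument. This reduction is the step I expect to be delicate: one must check that their treatment of unequal proofs costs only the quantity they bound by the product-test rejection, so that substituting our bound is legitimate. Now set $1-\eps=\ov_k(\ket\psi)$ and let $\ket v$ be a closest product state. The trace distance between $\ket\psi$ and $\ket v$ is $\sqrt{\eps}$. Since $V$ accepts $\ket v$ with probability at most $\sigma$, we get $\bra\psi V\ket\psi\le\sigma+\sqrt\eps$. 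By \eqref{eq:hm-rejection-linear}, $\PT_k(\ket\psi)\le 1-\eps/2$. Hence the acceptance probability is at most
\[
\tfrac12\bigl(1-\tfrac{\eps}{2}\bigr)+\tfrac12\min\{1,\sigma+\sqrt\eps\}.
\]

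\textbf{Optimizing over $\eps$.} Write $t=\sqrt\eps$. Consider first the region $\sigma+t\le1$. There the bound is $\frac12\bigl(1+\sigma+t-t^2/2\bigr)$, and $t-t^2/2$ is increasing on $t\le 1$, so the maximum sits at the endpoint $t=1-\sigma$. Its value is
\[
\tfrac12\bigl(1+\sigma+(1-\sigma)-\tfrac{(1-\sigma)^2}{2}\bigr)=1-\tfrac{(1-\sigma)^2}{4}=\sigma^\star.
\]
In the region $t\ge 1-\sigma$ the bound is $1-t^2/4$, which is decreasing in $t$, so it never exceeds $\sigma^\star$. This gives soundness $\sigma^\star$. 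The inclusion \eqref{eq:hm-improved-inclusion} follows immediately whenever $c^\star>\sigma^\star$, with proofs of $km$ qubits each.
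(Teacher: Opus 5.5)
\textbf{Gap: the reduction to identical proofs.} The step you flag as delicate is where the soundness argument does its real work, and with your protocol it fails as stated. You reduce to two identical pure proofs by appeal to Harrow--Montanaro's Lemma~5. But in your protocol $V$ always acts on the \emph{first} proof, and then identical copies are not the worst case. Take $k=2$, $\ket{\phi_1}=\sqrt{1-\eps}\ket{00}+\sqrt{\eps}\ket{11}$ and $\ket{\phi_2}=\ket{00}$. The two-input formula $\PT_k(\phi,\chi)=2^{-k}\sum_S\Tr(\rho_S^\phi\rho_S^\chi)$ gives $\PT_2(\phi_1,\phi_2)=1-3\eps/4$. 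This exceeds $\PT_2(\phi_1)=1-\eps+\eps^2$ whenever $\eps<1/4$. Since the $V$-term depends only on $\ket{\phi_1}$, the cheaters strictly gain by not sending a copy.

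The natural fix is Cauchy--Schwarz over the subsets $S$, giving $\PT_k(\phi_1,\phi_2)\le\frac12\bigl(\PT_k(\phi_1)+\PT_k(\phi_2)\bigr)$. This only bounds the cross term by the \emph{average} of the two self-acceptances. Because the second Merlin may send a product state, it yields rejection at least $\eps_1/4$. Your optimization then gives only $1-(1-\sigma)^2/8$, not $\sigma^\star$.

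\textbf{How the paper closes it.} It uses two ingredients you are missing.
\begin{enumerate}
\item It reduces to pure messages $\ket{\phi_1},\ket{\phi_2}$ by convexity in each Merlin's message separately. Then it applies the two-input purity formula and the Cauchy--Schwarz averaging above to get $\Delta\ge(\Delta_1+\Delta_2)/2\ge\bar\eps/2$, where $\bar\eps=(\eps_1+\eps_2)/2$.
\item Arthur runs $V$ on a \emph{uniformly random} one of the two proofs. The verification branch is then at most $\sigma+\frac12(\sqrt{\eps_1}+\sqrt{\eps_2})\le\sigma+\sqrt{\bar\eps}$ by concavity. Both branches are controlled by the same $\bar\eps$, and your one-variable optimization with $\bar\eps$ in place of $\eps$ gives $\sigma^\star$.
\end{enumerate}
Equivalently, with random verification the acceptance on $\ket{\phi_1}\otimes\ket{\phi_2}$ is at most the average of the two identical-copy acceptances. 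That is exactly the reduction you wanted, and it holds only because of the random choice.

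The rest of your proposal matches the paper: completeness, separability, the $\sqrt{\eps}$ continuity bound, and the final optimization.
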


\begin{proof}
The verifier is the Harrow--Montanaro verifier: each of the two Merlins sends a $k$-block state, and Arthur chooses uniformly between the product test on the two received $k$-block states and the original $k$-Merlin verification procedure applied to one of the two received states.  Completeness is unchanged: honest Merlins send two copies of an optimal product witness, the product test accepts with certainty, and the verification branch accepts with probability at least $c$, so the completeness is at least $c^\star=(1+c)/2$.

Now consider a no instance.  The total acceptance probability is convex in each of the two unentangled Merlin messages separately.  Therefore, when maximizing over two-Merlin strategies $\rho_1\otimes\rho_2$, it suffices to maximize over extreme points of the density-matrix sets.  Thus it suffices to consider pure $k$-block messages $\ket{\phi_1}$ and $\ket{\phi_2}$.  Write
\[
  \ov_k(\ket{\phi_i})=1-\eps_i\qquad(i=1,2).
\]
Let $1-\Delta_i$ be the acceptance probability of the product test on two identical copies of $\ket{\phi_i}$.  By \eqref{eq:hm-rejection-linear},
\begin{equation}\label{eq:hm-self-reject}
  \Delta_i\ge\frac{\eps_i}{2}.
\end{equation}

We next compare identical-copy and two-input product-test acceptance probabilities.  For pure $k$-partite states $\ket\phi$ and $\ket\chi$, let
\[
  \PT_k(\phi,\chi)=\bra{\phi\otimes\chi}\Pi_{\Prod,k}\ket{\phi\otimes\chi}
\]
denote the acceptance probability of the product test executed on the input $\ket\phi\otimes\ket\chi$, so that $\PT_k(\phi,\phi)=\PT_k(\phi)$, and write $\rho_S^\phi$ and $\rho_S^\chi$ for the reduced density matrices on the registers indexed by $S\subseteq[k]$.  Expanding the accepting projector as in the proof of Lemma~\ref{lem:purity-formula}, which is the two-state form of \cite[Lemma~2]{harrow2010Efficient}, gives
\begin{equation}\label{eq:cross-product-test-formula}
  \PT_k(\phi,\chi)
  =2^{-k}\sum_{S\subseteq[k]} \Tr(\rho_S^\phi\rho_S^\chi).
\end{equation}
By Cauchy--Schwarz in the Hilbert--Schmidt direct sum over all subsets $S$,
\begin{equation}\label{eq:hm-cross-cs}
  \PT_k(\phi,\chi)
  \le
  \sqrt{\PT_k(\phi,\phi)\PT_k(\chi,\chi)}
  \le
  \frac{\PT_k(\phi,\phi)+\PT_k(\chi,\chi)}2.
\end{equation}
Let $1-\Delta$ be the acceptance probability of the product test when its two inputs are $\ket{\phi_1}$ and $\ket{\phi_2}$.  Applying \eqref{eq:hm-cross-cs} to these two states gives
\[
  1-\Delta
  \le \frac{(1-\Delta_1)+(1-\Delta_2)}2.
\]
Hence, if
\[
  \bar\eps=\frac{\eps_1+\eps_2}{2},
\]
then \eqref{eq:hm-self-reject} implies
\begin{equation}\label{eq:hm-cross-reject}
  \Delta\ge\frac{\Delta_1+\Delta_2}{2}
  \ge \frac{\eps_1+\eps_2}{4}
  =\frac{\bar\eps}{2}.
\end{equation}

On the verification branch, choose a closest product state to each $\ket{\phi_i}$.  The original $k$-Merlin soundness is $\sigma$ on product witnesses, and the usual pure-state continuity estimate for any $0\le M\le I$ (see, e.g., \cite[Lemma~22]{harrow2010Efficient}) gives an additive loss at most $\sqrt{\eps_i}$ on input $\ket{\phi_i}$.  Since Arthur chooses one of the two received states uniformly, the verification branch accepts with probability at most
\begin{equation}\label{eq:hm-verifier-branch}
  \min\left\{1,\ \sigma+\frac{\sqrt{\eps_1}+\sqrt{\eps_2}}2\right\}
  \le
  \min\left\{1,\ \sigma+\sqrt{\bar\eps}\right\},
\end{equation}
where the last inequality is concavity of the square root.

Combining the two equally likely branches, the total acceptance probability is at most
\begin{equation}\label{eq:hm-soundness-optimization}
  \frac12\left(1-\frac{\bar\eps}{2}
  +\min\{1,\sigma+\sqrt{\bar\eps}\}\right).
\end{equation}
Set $a=1-\sigma$.  If $\bar\eps\ge a^2$, then \eqref{eq:hm-soundness-optimization} is at most
\[
  \frac12\left(2-\frac{\bar\eps}{2}\right)
  \le 1-\frac{a^2}{4}.
\]
If $\bar\eps\le a^2$, then $\sigma+\sqrt{\bar\eps}\le1$, and the right-hand side of \eqref{eq:hm-soundness-optimization} becomes
\[
  1-\frac a2+\frac{\sqrt{\bar\eps}}2-\frac{\bar\eps}{4}.
\]
The function $u\mapsto \sqrt u/2-u/4$ is increasing on $[0,1]$, so this is maximized over $0\le\bar\eps\le a^2$ at $\bar\eps=a^2$, where it equals $1-a^2/4$.  Thus the soundness is at most $\sigma^\star=1-(1-\sigma)^2/4$.

It remains only to record separability of the accepting measurement, following \cite[Lemma~6]{harrow2010Efficient}.  For a local register $\mathcal H$ of dimension $d$,
\[
  \Pi_{\mathcal H}
  =\frac{d(d+1)}2\int \ket{x}\bra{x}\otimes\ket{x}\bra{x}\,dx,
\]
where the integral is with respect to Haar measure on unit vectors.  Hence each local symmetric projector is separable across the two copies, and their tensor product, the product-test accepting operator, is separable across the two Merlins.  The verification branch has accepting operators of the form $M\otimes I$ and $I\otimes M$, which are product positive operators across the two-Merlin cut.  Convex combinations preserve separability, so the full accepting operator lies in $\mathsf{SEP}$.
\end{proof}

\begin{remark}[Effect on the collapse constants]
The equality $\QMA(k)=\QMA(2)$ is unchanged, but the one-shot soundness loss in the reduction is better.  The converted protocol has completeness at least $c^\star=(1+c)/2$ and soundness at most $\sigma^\star=1-(1-\sigma)^2/4$; as usual, the starting protocol is amplified before this conversion when these one-shot parameters do not already give a positive final gap.  Harrow and Montanaro's displayed parameter $1-(1-\sigma)^2/100$ is replaced by $1-(1-\sigma)^2/4$.  In particular, if a $k$-Merlin protocol has soundness $\sigma=1-1/q$, then the converted two-Merlin protocol has soundness at most
\[
  1-\frac{1}{4q^2}.
\]
In the non-perfect-completeness branch of their amplification argument, the preliminary amplification step \cite[Lemma~8]{harrow2010Efficient} turns a promise gap of $c-\sigma=1/q$ into soundness $1-1/(3q)$, and the conversion then yields soundness at most
\[
  1-\frac{1}{36q^2}.
\]
Thus the product-test part of the collapse pays a quadratic loss with constant $4$ rather than $100$.
\end{remark}

\begin{remark}[Keeping the full exact curve]
The proof above uses only the clean consequence $s(\omega)\le\omega$ of the exact curve.  One can retain the full curve in the same calculation.  For $0\le\eps<1$, define
\[
  R(\eps)=1-\PT_k^{\le}(1-\eps)=\frac{1-s(1-\eps)}2.
\]
Extend this continuously to $\eps=1$ by setting $s(0)=0$, so $R(1)=1/2$; equivalently, the endpoint $\eps=1$ may be interpreted as a limiting value.  Then the same proof bounds the soundness $\sigma_{\mathrm{ex}}$ of the converted protocol by the sharper, but less transparent, expression
\[
  \sigma_{\mathrm{ex}}\le
  \max_{0\le\eps_1,\eps_2\le1}
  \frac12\left(
    1-\frac{R(\eps_1)+R(\eps_2)}2
    +\min\left\{1,\sigma+\frac{\sqrt{\eps_1}+\sqrt{\eps_2}}2\right\}
  \right).
\]
The closed form in Proposition~\ref{prop:hm-improved-parameter} is the uniform parameter obtained from the optimal global linear rejection constant $1/2$.
\end{remark}

\section{Consequences and discussion}\label{sec:consequences}

\paragraph{Trace-distance soundness.}
For pure states, the trace distance between \(\ket\psi\) and \(\ket\phi\) is
\[
  \frac12\bigl\|\ket\psi\bra\psi-\ket\phi\bra\phi\bigr\|_1
  =\sqrt{1-\abs{\langle \psi|\phi \rangle}^2}.
\]
Consequently, the distance from \(\ket\psi\) to the set of product states is \(\sqrt{1-\ov_n(\ket{\psi})}\).  Corollary~\ref{cor:at-most} therefore gives the following property-testing form: if \(\ket\psi\) is \(\delta\)-far in trace distance from every product state, then
\[
  \PT_n(\ket{\psi})
  \le \frac{1+s(1-\delta^2)}2.
\]
When \(\delta\le1/\sqrt2\), the right-hand side is \(1-\delta^2+\delta^4\), so the product test rejects with probability at least \(\delta^2(1-\delta^2)\).  For arbitrary constant \(\delta>0\), the exact formula still gives a constant rejection probability independent of \(n\) and of the local dimensions.

\paragraph{Fixed dimensions.}
Theorem~\ref{thm:main-intro} is dimension-free: the upper bound applies to every finite choice of local dimensions, while the matching lower bound is allowed to choose dimensions large enough to realize the capped Schmidt spectrum. If the local dimensions are fixed in advance, the same upper bound remains valid verbatim. The only possible change is on the lower-bound side, where the capped spectrum may not fit in the available dimensions. That finite-dimensional extremal problem is separate from the dimension-independent soundness curve determined here.

\paragraph{Independence of the number of parties.}
For every \(n\ge2\), the dimension-free worst-case curve is the same.  Additional parties cannot increase the extremal acceptance probability: a worst case may always be realized by a bipartite state tensored with arbitrary product states.  This is a dimension-free statement.  The upper bound is valid for every prescribed finite collection of local dimensions, while the matching lower-bound construction may require two local dimensions of size about \(1/\omega\).

\paragraph{Explicit extremizing family.}
For \(\omega\in(1/(m+1),1/m]\), one extremizing family has squared Schmidt coefficients
\[
  \omega,\ldots,\omega,1-m\omega.
\]
At reciprocal points \(\omega=1/d\), this family can be chosen as the \(d\)-dimensional maximally entangled state, and
\[
  \PT_n(1/d)=\frac12\left(1+\frac1d\right).
\]
Between reciprocal points, the extremizing distribution is not uniform: it is the capped distribution with as many coordinates of mass \(\omega\) as possible and one residual coordinate.

\paragraph{Resolution of the low-overlap regime.}
By Corollary~\ref{cor:small-overlap},
\[
  \frac12\le \PT_n(\omega)\le\frac12+\frac\omega2,
\]
so \(\PT_n(\omega)\to1/2\) as \(\omega\to0\).  The theorem gives more than the limit: it identifies the complete fine structure of the convergence as the capped collision probability of the most concentrated probability distribution with weights at most \(\omega\).

\begin{remark}[A piecewise analogy with Razborov's triangle-density curve]\label{rem:razborov}
The shape in Figure~\ref{fig:comparison} is structurally reminiscent of Razborov's exact minimum triangle-density curve for graphs of fixed edge density \cite{razborov2008Minimal}.  In Razborov's theorem the formula changes at the Tur\'an densities \(1-1/k\), reflecting when the extremal multipartite graphon activates another part.  Here the formula changes at the reciprocal caps \(1/\omega\), reflecting when the capped Schmidt distribution activates another weight.  The analogy is only structural -- there is no reduction between the problems -- but in both settings an exact extremal profile is assembled from simple analytic pieces indexed by the number of active parts.
\end{remark}

\paragraph{Why the low-overlap regime needs a new accounting.}
The proof differs from the induction of Soleimanifar and Wright at one point.  After the first local SWAP test, their argument keeps the dominant diagonal Schmidt branch and upper-bounds the remaining contribution.  This is tight when \(\omega\ge1/2\), because the extremizing bipartite construction can have only two nonzero squared Schmidt coefficients.  For \(\omega<1/2\), however, no probability distribution whose probabilities are at most \(\omega\) can put all its mass on two coordinates.  Several diagonal branches can contribute nontrivially.  Retaining all diagonal branches changes the residual optimization from a two-branch calculation into the capped refinement inequality of Lemma~\ref{lem:refinement}, producing exactly the additional quadratic pieces.

\subsection*{Acknowledgments}

ChatGPT 5.5 Pro was used as a research assistant to help explore the analysis of the product test. ChatGPT 5.5 Pro and Claude Fable 5 were also used to assist in the writing, while the authors rewrote, revised, and improved the exposition and citations. In addition to human verification, Codex 5.5 was used to assist with an exploratory auto-formalization of the improved product-test analysis in Lean. The authors take full responsibility for the correctness, exposition, and attribution in the final manuscript.

\bibliographystyle{alpha}
{\small \bibliography{main}}

\end{document}